\newcolumntype{M}[1]{>{\centering\arraybackslash}m{#1}}
\newcolumntype{N}{@{}m{0pt}@{}}
\newcommand{\bra}[1]{\langle {#1} |}
\newcommand{\ket}[1]{ | {#1} \rangle}
\let\originalleft\left
\let\originalright\right
\renewcommand{\left}{\mathopen{}\mathclose\bgroup\originalleft}
\renewcommand{\right}{\aftergroup\egroup\originalright}
\newcommand{\Mod}[1]{\ \mathrm{mod}\ #1}
\newtheorem{theorem}{Theorem}[section]
\newtheorem{example}{Example}
\newtheorem{proposition}{Proposition}
\newcommand{\Yijia}[1]{ { \color{blue} (Yijia: {#1}) }}
\newcommand{\vva}[1]{ { \color{red} (VVA: {#1}) }}
\newcommand\vva*[1]{{\color{red} #1}}
\newcommand\Yijia*[1]{{\color{blue} #1}}
\def\prg#1{\paragraph*{{\bf #1}}}
\begin{document}
\begin{CJK*}{UTF8}{gbsn}

\title{Clifford operations and homological codes for rotors and oscillators}

\author{Yijia Xu (许逸葭)}
\thanks{Equal contribution.}

\affiliation{Joint Center for Quantum Information and Computer Science, NIST and University of Maryland, College Park, Maryland 20742, USA}
\affiliation{Joint Quantum Institute, National Institute of Standards and Technology
NIST and University of Maryland, College Park, Maryland 20742, USA}
\affiliation{Institute for Physical Science and Technology, University of Maryland, College Park, Maryland 20742, USA}

\author{Yixu Wang (王亦许)}
\thanks{Equal contribution.}
\affiliation{Institute for Advanced Study, Tsinghua University, Haidian District, Beijing, China, 100084}
\affiliation{Joint Center for Quantum Information and Computer Science, NIST and University of Maryland, College Park, Maryland 20742, USA}

\author{Victor V. Albert}
\affiliation{Joint Center for Quantum Information and Computer Science, NIST and University of Maryland, College Park, Maryland 20742, USA}

\begin{abstract}

We develop quantum information processing primitives for the planar rotor, the state space of a particle on a circle.
By interpreting rotor wavefunctions as periodically identified wavefunctions of a harmonic oscillator, we determine the group of bosonic Gaussian operations inherited by the rotor.
This \(n\)-rotor Clifford group, \(\text{U}(1)^{n(n+1)/2} \rtimes \text{GL}_n(\mathbb{Z})\), is represented by continuous \(\text{U}(1)\) gates generated by polynomials quadratic in angular momenta, as well as discrete \(\text{GL}_n(\mathbb Z)\) momentum sign-flip and sum gates.
We classify homological rotor error-correcting codes [\href{https://doi.org/10.48550/arXiv.2303.13723}{arXiv:2303.13723}] and various rotor states based on equivalence under Clifford operations.

Reversing direction, we map homological rotor codes and rotor Clifford operations back into oscillators by interpreting occupation-number states as rotor states of non-negative angular momentum.
This yields new multimode homological bosonic codes protecting against dephasing and changes in occupation number, along with their corresponding encoding and decoding circuits.
In particular, we show how to non-destructively measure the oscillator phase using conditional occupation-number addition and post selection.
We also outline several rotor and oscillator varieties of the GKP-stabilizer codes [\href{https://doi.org/10.48550/arXiv.1903.12615}{arXiv:1903.12615}].

\end{abstract}
\date{\today}
\maketitle

\end{CJK*}

\section{Introduction}

Quantum information processing can be done in various quantum platforms, which can be described in the abstract by one of only a few state spaces.
For example, the qudit state space describes any few-level physical system, irrespective of the physical nature of the levels.
Similarly, a continuous-variable harmonic oscillator state space models vibrations in ions and materials, as well as photons confined to cavities or traveling through various media.

Conventional quantum state spaces come with a set of primitives --- canonical states and operations --- that are both physically motivated and essential for information processing schemes.
For the case of a qudit, canonical operations come from the Pauli group as well as its normalizer, the Clifford group; canonical states are eigenstates of the Pauli operators \cite{gottesman1998heisenberg,tolar2018clifford}.
 For the case of the oscillator, the operations are the oscillator displacements and more general quadratic Gaussian (a.k.a.\ Bogoliubov) operations, while the canonical states are states of fixed position or momentum \cite{lloyd2012gaussian,serafini2017quantum}.

\begin{figure}[t]
    \centering
    \includegraphics[width=0.45\textwidth]{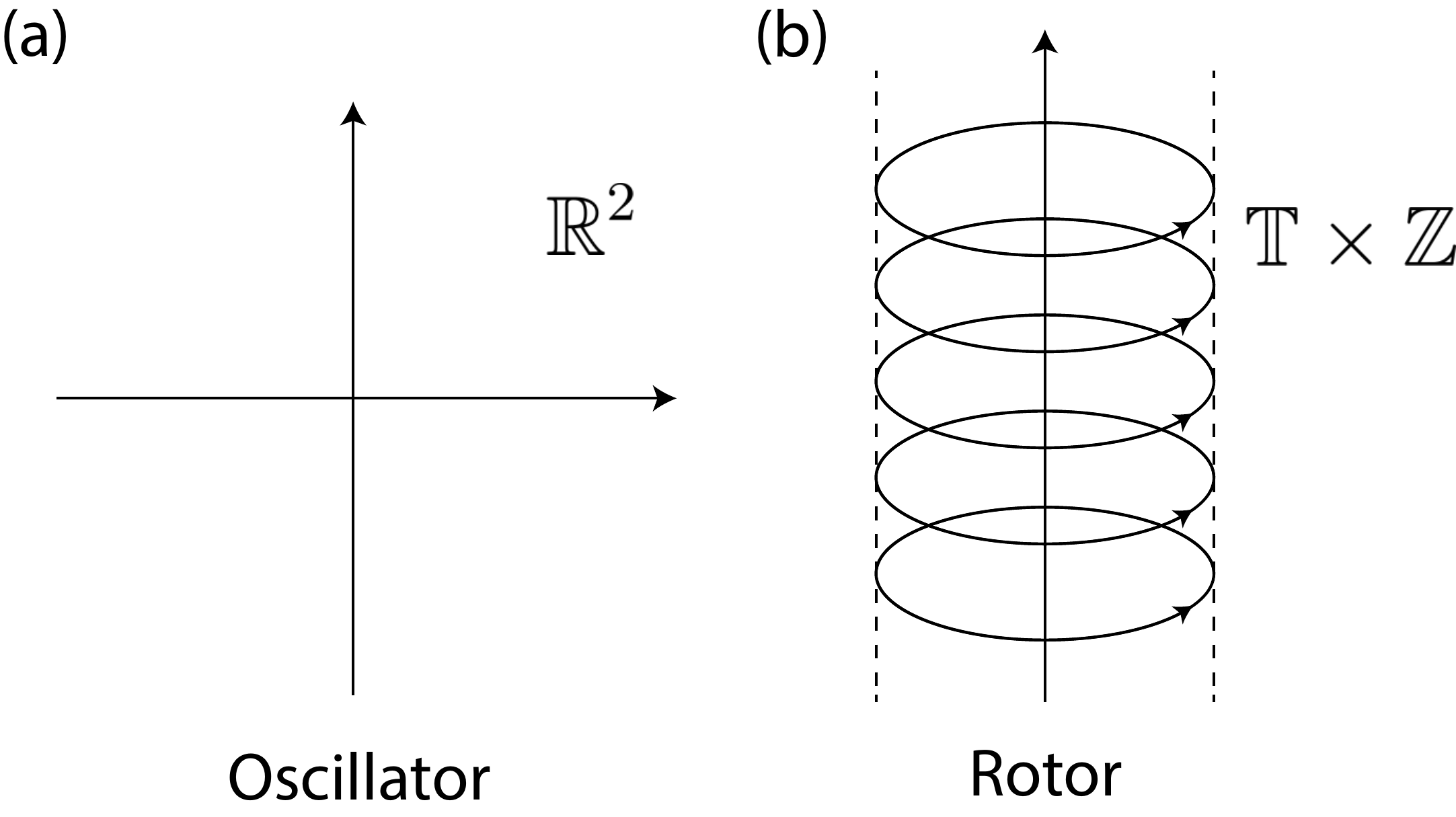}
    \caption{Comparison between the phase space of (a) oscillator and (b) rotor systems. The dashed line denotes that the rotor angular momenta are confined to the integers.}
    \label{fig:phase_space}
\end{figure}

\begin{figure*}[t]
    \centering
    \includegraphics[width=0.9\textwidth]{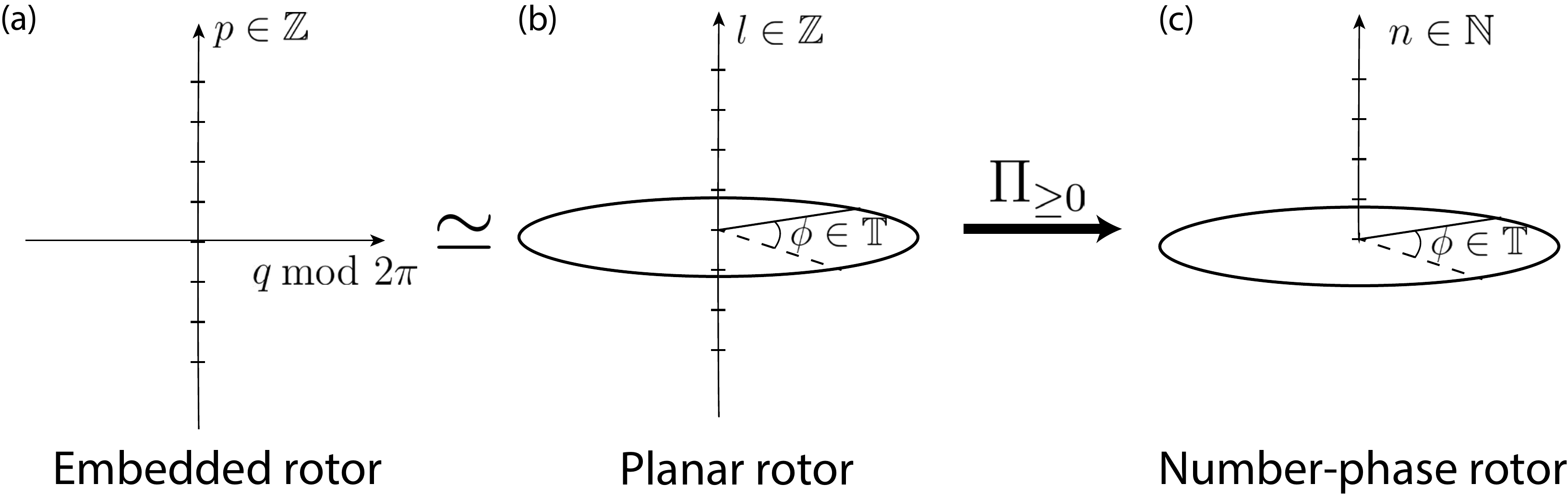}
    \caption{
    The planar rotor (b) describes the motion of a particle on a circle \(\mathbb T\), with said particle admitting only integer-valued angular momenta.
    This phase space can be embedded into that of the harmonic oscillator by periodically identifying oscillator positions \(q\), yielding the embedded rotor (a).
    Alternatively, interpreting oscillator Fock states as the positive-momentum states of a rotor yields the number-phase interpretation of the oscillator (c).
    Relations between these constructions allow us to identify primitives necessary for information processing as well as develop various error-correcting codes for all three state spaces.
    }
    \label{fig:rotors}
\end{figure*}

While few-level systems and harmonic oscillators have been well studied for over 100 years, a third ``angular'' state space --- the planar or \(\text{U}(1)\) rotor \cite{carruthers1968phase,levy1976afraid,mukunda1979wigner,floreanini1991quantum,kowalski1996coherent,zhang2003phase,chadzitaskos2012quantizations,przanowski2014weyl,kastrup2016wigner,kowalski2021wigner,mivsta2022angle,gazeau2022integral, Mista2022angle}--- has lagged behind in its development.
For example, while the types of primitive rotor operations have long been known \cite{juan2014normalizer,bermejo2016normalizer,albert2017general,albert2020robust}, the specific group formed by them is, to our knowledge, not yet established.

A planar rotor describes the state space of a quantum system confined to a circle.

Such systems have been overlooked in the past due to a lack of controllable quantum platforms amenable to a rotor description.
However, rotor systems are gaining traction due to recent improvements in the control of
superconducting circuits \cite{koch2007charge,schreier2008suppressing},  ultracold molecules \cite{anderegg2019optical,liu2019molecular}, ion traps \cite{urban2019coherent,glikin2023systematic}, free electrons \cite{reinhardt2021free,dahan2023creation}, and orbital angular momentum systems~\cite{rigas2008full,rigas2010non,rigas2011orbital,kastrup2016wigner,kastrup2017wigner_operator,kastrup2017wigner_qi}.

The recent exciting experimental progress warrants a deeper quantitative investigation into the rotor's information processing primitives.
We perform this investigation in this work, obtaining a firmer understanding of rotor states and operations.
We also obtain several immediate applications to quantum error correction of both oscillator and rotor platforms.

We enumerate canonical primitives of the planar rotor and, in particular, determine the group formed by its canonical unitary operations.
We do so by treating the rotor --- whose configuration space is described by an angle --- as a subspace of the harmonic oscillator that is defined by periodically identifying oscillator positions.
This embedding of the rotor into the oscillator allows us to view rotor primitives as oscillator primitives that preserve the embedded rotor subspace.
Our embedded rotor treatment also provides insight into quantum error correction, yielding a rotor version of a recent class of oscillator error-correcting codes \cite{noh2020encoding} and revealing several connections between oscillator and rotor codes.

In parallel, we focus on the oscillator itself, developing its rotor-like ``polar'' interpretation in terms of occupation number and phase degrees of freedom \cite{susskind1964quantum,carruthers1968phase,pegg1989phase}.
This interpretation does not quite yield a bona-fide rotor, which has led to much pontificating in the literature \cite{lynch1995quantum,gour2002quantum}
Nevertheless, we show that many of the primitives of the planar rotor can be transferred into this number-phase rotor interpretation of the oscillator in a lossless fashion.
Other primitives are somewhat distorted, but are still present in this number-phase ``rotor''. For example, certain unitary operations become non-unitary when mapped into the number-phase rotor, but nevertheless remain valid and useful quantum channels.

Our development of number-phase primitives allows us to map several classes of error-correcting codes, including the recently developed homological rotor codes \cite{vuillot2023homological}, into the number-phase ``rotor''.
These new oscillator codes are polar analogues of lattice codes \cite{gottesman2001encoding, noh2020encoding} \eczoo{quantum_lattice} and are compatible with oscillator noise channels where photon loss is present but random-rotation (i.e., dephasing) noise is dominant.
As such, we anticipate that these codes will be relevant to trapped-ion systems \cite{glikin2023systematic, fluhmann2019encoding}.

\subsection{Summary of results}

Positions of a planar rotor are labeled by an angle, which makes for a compact configuration space.
The dual basis of momentum states is labeled by the integers \(\mathbb{Z}\), making for a discrete and infinite set of labels called the circle group \(\mathbb{T}\cong \text{U}(1)\).
A planar rotor can thus be thought of as being ``in between'' the qudit and oscillator, encapsulating both the compactness of the former and the infinite-dimensional nature of the latter.

\prg{Rotor Clifford group}

The \(n\)-qubit Clifford group consists of all unitary operations that preserve Pauli-matrix commutation relations, and we exclude the Pauli group from this definition for simplicity \cite{gottesman1998heisenberg}.
Similarly, the analogous \(n\)-oscillator group of Gaussian transformations consists of all operations that preserve the commutation relations between position and momentum \cite{bartlett2002efficient,ferraro2005Gaussian,serafini2017quantum}.
In both cases, the commutation-preservation can be tied to preservation of a particular symplectic form, and the two groups correspond to the symplectic groups \(\text{Sp}_{2n}(\mathbb{Z}_2)\) and \(\text{Sp}_{2n}(\mathbb{R})\), respectively.

The rotor Clifford group is the group of unitary operations that preserve commutation relations between rotor position and momentum shift Pauli-type operators.
It has been studied before in the context of efficient simulation \cite{bermejo2016normalizer}, and its generators have been detailed before \cite{bermejo2012classical,juan2014normalizer,bermejo2016normalizer,albert2020robust}.
However, the \textit{structure} of this group has, to our knowledge, not yet been identified.
This is, in part, due to the fact that a rotor's angular positions and momenta are labeled by \textit{different} types of numbers --- angles vs.\ integers --- which complicates analogous symplectic formulations.

By periodically extending planar rotor wavefunctions such that they form a subset of oscillator wavefunctions, we observe that the rotor Clifford group can be thought of as a subgroup of the oscillator symplectic group, $\text{Sp}_{2n}(\mathbb{R})$.
Projecting this group into the state subspace of the rotor, we find the \(n\)-rotor Clifford group to be a semi-direct product of two groups --- the Lie group $\text{U}(1)^{n(n+1)/2} $ and the discrete group $\text{GL}_n(\mathbb{Z})$ of unimodular integer-valued invertible matrices,
\begin{eqs}
    \text{Rotor Clifford group}=\text{U}(1)^{n(n+1)/2} \rtimes \text{GL}_n(\mathbb{Z})~.
\end{eqs}
The Lie group corresponds to gates generated by products of two rotor momenta, while the discrete group is generated by conditional momentum shifts and momentum sign flips.
We summarize our results in Table~\ref{table:group_for_oscillator_and_rotor}.

\prg{Rotor ``Gaussian'' states}
A key primitive of the harmonic oscillator is the set of Gaussian states \cite{lloyd2012gaussian} --- states which can be characterized exclusively by expectation values of linear and quadratic polynomials in position and momentum.
At the edge of this set are position and momentum states, which admit infinitely precise expectation values of the two quadratures.

Projecting oscillator position and momentum states into the rotor subspace and removing any states whose projected versions are no longer Gaussian shows that only the momentum states survive.
Similarly, projecting a class of Gaussian states called coherent states yields a class of previously developed rotor coherent states \cite{kowalski1996coherent,hall2002coherent, gonzalez1998coherent,rigas2011orbital, gazeau2022integral} (cf. \cite{chadzitaskos2012quantizations, kowalski2021wigner, gulshani1979generalized})

We study how coherent states transform under Clifford operations, revealing that the Clifford subgroup that preserves their structure is the \textcolor{black}{signed} permutation group (in contrast to the much larger group of passive linear-optical transformations in the case of the oscillator \cite{lloyd2012gaussian}).
While squeezed oscillator states do not make the cut when projected into the embedded rotor, they can nevertheless be simulated by adding an extra ``regularization'' parameter \cite{menicucci2014fault} to the rotor coherent states, yielding states equivalent to those in Ref.~\cite{kowalski2002uncertainty}.

We also take a look at how the canonical rotor Hamiltonian describing a Josephson junction \cite{girvin2014circuit} transforms under Clifford operations, revealing an interplay of Gaussian and stabilizer transformations.

\prg{Classifying homological rotor codes}

Equipped with better understanding of the rotor Clifford group and the embedded rotor construction, we investigate the structure of the homological rotor codes \cite{vuillot2023homological} \eczoo{homological_rotor} --- a recent extension of stabilizer codes \cite{gottesman1997stabilizer} \eczoo{qubit_stabilizer}
to rotors.

We classify homological rotor codes using the Smith normal form --- the standard tool for homology-group calculation.
We show that the Smith normal form of a code cannot be changed to that of another code by any rotor Clifford operation that preserves the code's CSS structure \eczoo{css}.
The different Smith normal forms thus label different equivalence classes of homological rotor codes under such operations.
If Clifford operations are the only available operations for an encoding, this implies that such operations have to act on a resource state within the same Smith class. These resource states are tightly related to oscillator Gottesman-Kitaev-Preskill (GKP) states \cite{gottesman2001encoding} \eczoo{gkp}.

\prg{Homological number-phase codes}

Returning to the harmonic oscillator, we study multi-mode extensions of the number-phase codes \cite{arne2020rotation,hillmann2022performance,endo2022quantum} \eczoo{number_phase} --- polar analogues of bosonic lattice codes \eczoo{quantum_lattice} that protect against occupation-number loss/gain and dephasing, which correspond to distortions in the oscillator's number and phase degrees of freedom, respectively. The error correction scheme and performance of random rotation-symmetric codes are recently studied in Ref.~\cite{marinoff2023explicit, totey2023performance}.

Bosonic rotation codes are defined for a single oscillator, and multimode extensions have not been substantially studied \cite{jain2023quantum}.
We show how to map the entire class of homological rotor codes into the oscillator, yielding a new class of polar-like codes protecting against loss and dephasing noise.

We also show that the rotor Clifford-group encodings of homological rotor codes can be performed by analogous operations in the number-phase picture, but with some caveats.
Mapping rotor Clifford group transformation into the number-phase interpretation of the oscillator yields a Clifford \textit{semigroup} consisting of some non-unitary transformations.
For example, while quadratic momentum gates are mapped to a Kerr interaction, conditional momentum shifts are mapped to conditional photon injection --- a non-unitary operation.

Nevertheless, all Clifford semigroup operations are valid quantum channels, and the original rotor Clifford algebra is mostly left intact after the number-phase mapping.
As a result, a Clifford-based encoding of homological codes is still possible.

On the other hand, extraction of error syndromes for number-phase codes becomes more difficult.
Nevertheless, we show it is possible to measure the effect of an oscillator rotation \textit{non-destructively} with the help of post-selection.
This provides a probabilistic error recovery alternative to Knill tele-correction for bosonic rotation codes \cite{grimsmo2021quantum} and may be relevant to metrological protocols for determining the oscillator phase in a non-desctructive fashion.

\prg{Other new codes and relations}

The relation between planar rotors, their embeddings into the oscillator, and the number-phase interpretation of the oscillator (see Fig.~\ref{fig:rotors}) allows us to treat various seemingly unrelated error-correcting codes in the same fashion.

We outline how GKP-stabilizer codes \cite{noh2020encoding} \eczoo{gkp-stabilizer} can be mapped into the number-phase degrees of freedom of the oscillator, yielding another class of codes protecting a (possibly infinite) logical state space against loss and dephasing noise.

The embedded rotor construction enables us to investigate the underlying connections between rotor and oscillator codes. In particular, for an embedded rotor, a single homological rotor code with torsion is equivalent to an oscillator GKP code encoding a qudit. Moreover, rotor GKP codes \cite{gottesman2001encoding,raynal2010encoding,albert2020robust} \eczoo{rotor_gkp}, can be included in the same framework, as a concatenation of homological rotor codes and modular-qudit GKP codes \cite[Sec.~II]{gottesman2001encoding} \eczoo{qudit_gkp} (see Example \ref{example:torison_GKP}). While homological number-phase codes are multi-mode generalizations of the original number-phase codes \cite{arne2020rotation}, they can also be viewed as a rotation-symmetric generalization of multi-mode GKP codes. These relations are illustrated in Fig.~\ref{fig:code_relation}.

\subsection{Outline of the manuscript}

In Sec.~\ref{sec:preliminary}, we introduce the generalized Pauli operators for planar rotors. We describe a method to embed a logical planar rotor into a single-mode harmonic oscillator, as well as the number-phase interpretation of the oscillator.

Next, we investigate the generators of rotor Clifford group and their symplectic representation in Sec.~\ref{sec:Clifford_generator} and identify its group structure in Sec.~\ref{sec:Clifford_group_structure}. In Sec.~\ref{sec:non_exist_squeezing}, we show that a unitary squeezing operation cannot exit for rotors, in contrast to oscillator systems.
In Sec.~\ref{sec:passive_transformation_group}, we give the subgroup of rotor Clifford group that is also a subgroup of the group of Gaussian transformations that preserve the total occupation number.

In Sec.~\ref{sec:rotor_Gaussian_state}, we study ``Gaussian'' states of rotors and their behavior under Clifford group and passive subgroup, including nullifier states (Sec.~\ref{sec:rotor_nullifier_states}) and coherent states (Sec.~\ref{sec:rotor_coherent_state}). We also discuss how the Clifford group transforms the Josephson-junction Hamiltonian in Sec.~\ref{sec:josephson_junction}.

We then revisit the formalism of homological rotor codes and investigate the physical implications of torsion and Smith normal form by calculating the codes' homology group in Sec.~\ref{sec:torsion}.
In Sec.~\ref{sec:equivalence_class}, we show that the codes with different torsion parts from different equivalent classes which cannot be related by CSS Clifford transformations. In Sec.~\ref{sec:rotor_gkp}, we show that rotor GKP codes are concatenations between homological rotor codes and modular-qudit codes.

In Sec.~\ref{sec:homo_np_rotor_code}, starting from an Example.~\ref{example:np_as_projected_rotor} that number-phase codes are rotor GKP codes after projecting on the non-negative angular momentum subspace, we propose the homological number-phase codes --- a multi-mode generalization of number-phase codes. In Prop.~\ref{pros:existance_homo_np}, we provide a procedure of mapping homological rotor codes to homological number-phase codes.
 
 In Sec.~\ref{sec:clifford_semigroup_encoding}, we demonstrate how to use the Clifford semigroup of number-phase operations to encode in homological number-phase codes.

 In Sec.~\ref{sec:gkp_repetition_rotor}, we show that GKP-stabilizer codes can be generalized to $\text{U}(1)$ rotor systems as well as number-phase rotor systems and compare their differences.

In the appendices, we collect several miscellaneous results related to quantum applications of rotors.
In Appendix~\ref{app:coherent_Wigner}, we calculate the Wigner function for the rotor GKP codewords and show they indeed have negativity. In Appendix~\ref{app:ec_condition_rotor_GKP}, we calculate the error correction conditions for the normalized rotor GKP codes and discuss their relations with Jacobi $\vartheta$ functions. In Appendix~\ref{app:nogo_rotor} we discuss an analogy of Gaussian encoding no-go theorem \cite{vuillot2019toric} for rotors.

\section{The planar rotor and friends}\label{sec:preliminary}

In this section, we review the setup of the planar, or $\text{U}(1)$, quantum rotor and its various connections to the harmonic oscillator (see Fig.~\ref{fig:phase_space})~\cite{bermejo2016normalizer,albert2017general,albert2020robust,albert2022bosonic}.

\subsection{Planar rotor}

The state space of a rotor is the same as that of a particle on the group $\text{U}(1)$, arising naturally from a quantum body rotating in two dimensions. The state space admits bases of fixed particle position and fixed particle momentum.
We associate the former with the ``Pauli $X$-basis'' of the rotor, denoting basis elements by a phase $\theta \in [0,2\pi)$, which also label elements of the circle group,
\begin{eqs}
    \text{U}(1)\cong\mathbb{T}\cong\mathbb{R}/2\pi~.
\end{eqs}
Conversely, the dual ``Pauli $Z$-basis'' is characterized by irreducible representations of $\text{U}(1)$, which are labeled by the integers $\mathbb{Z}$. The two bases are
\begin{eqs}
    X\text{-basis}&:~~ \ket{\theta}, \quad \theta\in \mathbb{T},\\
    Z\text{-basis}&:~~ \ket{l}, \quad l \in \mathbb{Z},
\end{eqs}
where we call $\ket{\theta}$ as a vector in phase state, and $\ket{l}$ as an angular momentum state.
The former can be expressed in terms of the latter via the Fourier series,
\begin{eqs}\label{eq:rotor-position}
    \ket{\theta}=\frac{1}{\sqrt{2\pi}}\sum_{\ell\in\mathbb{Z}}e^{i\theta\ell}\ket{\ell}~,
\end{eqs}
and visa versa.

The fundamental operators on a single rotor are the generalized Pauli operators. The Pauli $X$ operator is parameterized by an integer $m \in \mathbb{Z}$, and the Pauli $Z$ operator is parameterized by a phase factor $\phi \in \mathbb{T}$. Their actions on the angular position and momentum states are as follows:
\begin{eqs}\label{eq:rotor_pauli_on_state}
    & X(m) \ket{\theta}=e^{im\theta} \ket{\theta}, X(m)\ket{l}=\ket{l+m}, \quad m\in \mathbb{Z},\\
    & Z(\phi) \ket{\theta} = \ket{\theta-\phi}, Z(\phi) \ket{l}= e^{i\phi l } \ket{l}, \quad \phi\in \text{U}(1).
\end{eqs}
These are natural generalizations of qudit Pauli operators, defined on the space of a particle on the group $\mathbb{Z}_q$, or harmonic oscillator displacement operators defined for $\mathbb{R}$.

We would like to present the Pauli operators in terms of the fundamental degrees of freedom phases and angular momentum.
Though the angular momentum operator $\hat{l}$ is well-defined, we cannot define phase operator $\hat{\theta}$ individually because of the ambiguity introduced by the $2\pi$ periodicity. Nevertheless, we can still express Pauli operators as exponentials of either operator,
\begin{eqs}\label{eq:rotor_pauli_operator}
    X(m)= e^{im\hat{\theta}}, \quad Z(\phi)= e^{i \phi \hat{l}}.
\end{eqs}
The commutation relation between Pauli $X$ and $Z$ for rotor is
\begin{eqs}\label{eq:rotor_pauli_algebra}
    X(m)Z(\phi)=e^{-im\phi} Z(\phi) X(m),
\end{eqs}
where the phase factor arises from the rotor commutation relation $[\hat{l}, e^{i \hat{\theta}}]= e^{i \hat{\theta}}$.
Tensor products of these Pauli operators generate the \(n\)-rotor Pauli group, \(\mathcal{P}^{\text{rot}}_n\).

\subsection{Embedded rotor}

A planar rotor can be embedded into a harmonic oscillator by periodically identifying the oscillator's positions [see Fig.~\ref{fig:rotors}(a-b)].
It was pointed out in Ref.~\cite{vuillot2023homological} that this embedding can be done by restricting to the \(+1\) eigenspace of one of the stabilizers of the GKP code. This stabilizer, $\hat{S}_q=\exp(i 2\pi \hat{p})$, written in terms of the oscillator momentum \(\hat p\), shifts the oscillator position \(q\) by \(2\pi\). The stabilizer constraint,
\begin{eqs}
    \hat{S}_q\ket{\psi}&=\int dq \psi(q) \ket{q-2\pi} \overset{!}{=} \int dq \psi(q) \ket{q}=\ket{\psi}~,
\end{eqs}
restricts the possible oscillator wavefunctions to those satisfying $\psi(q+2\pi)=\psi(q)$.
In other words, imposing the above stabilizer constraint is equivalent to imposing \(2\pi\)-periodicity in the $q$ representation.\footnote{In terms of the oscillator subsystem decomposition from Ref.~\cite{pantaleoni2020modular}, the above performs the decomposition \(\mathbb{T}\otimes 2\pi\mathbb{Z}=\mathbb{R}\), where $\mathbb{T}$ is the embedded rotor space, and the factor $2\pi \mathbb{Z}$ is called the gauge space.
}

The stabilizer \(\hat{S}_q\) shifts each oscillator position state by \(2\pi\), meaning that its eigenstates are superpositions of a position state with all states related to it by a shift of a multiple of \(2\pi\).
Such states are labeled by an angle and correspond to the position states of the embedded rotor,
\begin{eqs}
  |\theta_{\text{emb}}\rangle=\sum_{m\in\mathbb{Z}}\left|q=2\pi m+\theta\right\rangle~.
\end{eqs}
Since they are eigenstates of a GKP stabilizer, these are of the same comb-like form as the logical codewords of the GKP code.

An oscillator momentum state is an eigenstate of \(\hat{S}_q\) only when the momentum \(p\) is an integer.
Such states can be expressed in terms of embedded-rotor position states via a Fourier transform,
\begin{eqs}
  |p\rangle&\propto\int dqe^{ipq}\left|q\right\rangle \\&=\sum_{\ell\in\mathbb{Z}}\int d\theta e^{ip\left(2\pi\ell+\theta\right)}\left|2\pi\ell+\theta\right\rangle \\&=\int d\theta e^{ip\theta}|\theta_{\text{emb}}\rangle~.
\end{eqs}
Note that the integer momentum assumption is necessary for the last equality to hold.

Pauli operators for the embedded rotor,
\begin{eqs}\label{eq:embedded_rotor_pauli}
    X(m)&=e^{im\hat{q}}, ~m\in \mathbb{Z},\\
    Z(\phi)&=e^{i \phi \hat{p}}, ~\phi \in \mathbb{R}~,
\end{eqs}
where \(\hat q\) is the oscillator position operator,
form the subset of oscillator displacements that preserve the embedded rotor space.
Because of the periodicity in $q$, the momentum $p$ is quantized and can only take integer values. As a logical operator, $X(m)$ must commute with stabilizer $\hat{S}_q$, so $m$ can only take integer values.
The oscillator commutation relation $[\hat{q},\hat{p}]=i$ yields that of the embedded rotor,
\begin{eqs}
    e^{im \hat{q}} e^{i \phi \hat{p}}= e^{-im \phi} e^{i \phi \hat{p}} e^{im \hat{q}}~.
\end{eqs}

The embedded rotor helps us understand various features of the planar rotor from the harmonic oscillator perspective.
Since this space inherits oscillator quadrature operators, both its position and momentum operators are well-defined,
in contrast to a real quantum rotor in which a standalone $\hat{\theta}$ is not well-defined.
This makes the embedded rotor ``stretchable'', meaning that any periodicity can be imposed in lieu of \(2\pi\). In other words, if we squeeze the oscillator quadrature which embeds a rotor, we are essentially redefining the embedded rotor space without disturbing any logical information within the rotor (see Sec.~\ref{sec:rotor_Clifford} and Sec.~\ref{sec:rotor_Gaussian_state}).

\subsection{Number-phase rotor}

A different and, in this case, lossy way to embed the \(\text{U}(1)\) rotor into an oscillator is to associate rotor momentum states with oscillator Fock bases, $\ket{n},~n \in \mathbb{N}$.
Since $\mathbb{N} \subset \mathbb{Z}$, we can view an oscillator as a rotor that is \textit{projected} onto the non-negative angular momentum subspace. We can then associate rotor position states from Eq.~\eqref{eq:rotor-position} with the Pegg-Barnett oscillator phase states \cite{susskind1964quantum,carruthers1968phase,pegg1989phase},
\begin{eqs}\label{eq:oscillator_phase_state}
    \ket{\theta}_{\text{np}}&=\sum_{n \in \mathbb{N}} e^{-in\theta} \ket{n},~\theta \in \mathbb{T}~.\\
\end{eqs}
These states play important roles in quantum error correction \cite{grimsmo2021quantum,hillmann2022performance} and quantum metrology \cite{holevo2011probabilistic,iosue2022continuous}, and are relevant in the construction of continuous-variable designs \cite{iosue2022continuous}.
Indeed, both $\ket{\theta}_{\text{np}},\forall \theta \in \mathbb{T}$ and $\ket{n}, \forall n \in \mathbb{N}$ form a  complete (but, in the former case, non-orthonormal) basis.
We denote this re-interpretation of the oscillator as a \textit{number-phase rotor} [see Fig.~\ref{fig:rotors}(b-c)].

We define quadrature operators of the number-phase rotor using the projection
\begin{eqs}
  \Pi_{\geq m}=\sum_{l \geq m} \ket{l} \bra{l}
\end{eqs}
for \(m=0\); this projection removes all negative momentum states.
We denote all projected operators by the subscript ``np'', i.e., \(O_{\text{np}} = \Pi_{\geq 0} O \Pi_{\geq 0}\) for any operator \(O\).

The number-phase rotor Pauli operators are
 \begin{eqs}\label{eq:fake_rotor_quadrature}
   X(m)_{\text{np}}&=\begin{cases}
   \sum_{n=0}^{\infty}\ket{n+m}\bra{n} & m\geq0\\
   \sum_{n=0}^{\infty}\ket{n}\bra{n+|m|} & m<0
   \end{cases}\\Z(\phi)_{\text{np}}&=e^{i\phi\hat{n}}\,,
  \end{eqs}
where $\hat{n}=\sum_{l=0}^\infty l\ket{l} \bra{l} $ should be interpreted as the photon number operator.

In the number-phase interpretation, $X(m)_{\text{np}}$ performs \(m\)-photon injection (subtraction) for positive (negative) \(m\), and both correspond to powers of the Kogut-Susskind phase operator and its adjoint \cite{susskind1964quantum,bergou1991operators,bartlett2002quantum}.
The $Z(\phi)_{\text{np}}$ operator is a single-mode rotation by $\phi$, generated by the harmonic oscillator Hamiltonian.

Because for $m\geq 0$, we have
\begin{eqs}
    X(m)_{\text{np}}^\dagger X(m)_{\text{np}}&=\mathbb{I},\\
    \quad X(m)_{\text{np}} X(m)_{\text{np}}^\dagger &=\Pi_{\geq m} \neq \mathbb{I},
\end{eqs}
hence $X(m)_{\text{np}}$ is not a unitary. The intuition behind this is that one can inject and then subtract the same number of photons on arbitrary states. But subtracting and then injecting $m$ photons can only be applied to states that have at least $m$ photons to begin with. Overall, $X(m)_{\text{np}}$ and $Z(\phi)_{\text{np}}$ generate a semigroup with an identity element, i.e., a monoid.

The number-phase treatment of an oscillator as a rotor without negative angular momentum eigenstates turns out to be quite fruitful, as much of the algebra structure of rotors survives under this projection \cite{albert2022bosonic}, e.g., the Pauli commutation relation,
\begin{eqs}
  X(m)_{\text{np}}Z(\phi)_{\text{np}}=e^{-im\phi}Z(\phi)_{\text{np}}X(m)_{\text{np}}\,.
\end{eqs}
The similarities are sufficient to enable us to map homological rotor codes into sensible codes for number-phase rotors.
We return to this ``rotor'' in Sec.~\ref{sec:homo_np_rotor_code}.

\section{Rotor Clifford group}\label{sec:rotor_Clifford}

The qubit Clifford group is a collection of operations which map a tensor product of Pauli operators to another tensor product of Pauli operators while also preserving the commutation relations among them.
To be specific, given the \(n\)-qubit Pauli group $\mathcal{P}_n$, the corresponding \(n\)-qubit Clifford group is
\begin{eqs}\label{eq:clifford_quotient}
    C(\mathcal{P}_n)=N_{\text{U}(2^n)}(\mathcal{P}_n)/\mathcal{P}_n~,
\end{eqs}
where $N_{\text{U}(2^n)}$ is the group of all elements of $\text{U}(2^n)$ that preserve the Pauli group \cite{tolar2018clifford}.
The denominator denotes that we study the group elements up to (right) multiplication by an element of the Pauli group,
and also ignore any phases.
Clifford groups for qudits, oscillators, and rotors are defined similarly.
These form a subgroup of ``easy'' gates for quantum computation, but cannot be used to obtain an arbitrary logical gate.

From a quantum error correction point of view, the Clifford group plays a fundamental role in designing and characterizing quantum error-correcting codes \cite{haah2013commuting,haah2016algebraic,haah2021clifford,haah2022topological,haah2023nontrivial}, as well as the equivalence and deformation of codes \cite{chen2023equivalence,bonilla2021xzzx,dua2022clifford,huang2023tailoring}. Investigations of the Clifford group also yield efficient classical algorithms for simulating Clifford circuits, guaranteed by the Gottesman-Knill theorem \cite{gottesman1998heisenberg} and its generalization to arbitrary abelian groups \cite{juan2014normalizer,bermejo2016normalizer}.
On the practical side, the underlying mathematical structure of circuit quantization and its connections to symplectic transformations have also recently been studied \cite{egusquiza2022algebraic,parra2022quantum,parra2023geometrical,rymarz2023consistent,osborne2023symplectic}.

A nice property of Clifford groups is that their elements can be expressed as symplectic transformations acting on particular vectors --- \(2n\)-dimensional vectors $(X|Z) \in \mathbb{Z}_d^{2n}$ for an \(n\)-qudit system, and \(2n\)-dimensional real vectors $(\Vec{q}|\Vec{p}) \in \mathbb{R}^{2n}$ for an \(n\)-mode system.
Since the domain of both quadrature pairs are the same in both cases, it is easy to show that the qudit and oscillator Clifford groups are $\text{Sp}_{2n}(\mathbb{Z}_d)$ \cite{hosten2005stabilizer} and $\text{Sp}_{2n}(\mathbb{R})$ \cite{lloyd2012gaussian, serafini2017quantum}, respectively.
Since rotor systems are \textit{hybrid} --- behaving like continuous-variable systems in the phase basis and discrete-variable systems in the momentum basis --- the rotor Clifford group is not as easy to read off.

In this section, we first present the generators of the Clifford group for $n$ rotors, highlighting the reason why the squeezing operator does not exist in the rotor system.
We then identify the \(n\)-rotor Clifford group to be the semi-direct product group $\text{U}(1)^{n(n+1)/2} \rtimes \text{GL}_n(\mathbb{Z})$.
We also delineate the subgroup of the Clifford group that preserves the structure of rotor coherent states, facilitating the discussion in Section \ref{sec:rotor_coherent_state}.

\begin{table*}[t]
\centering

\begin{tabular}{cccc}
\toprule
 & Oscillators \cite{ferraro2005Gaussian,serafini2017quantum} & Planar rotors (this work) & Qubits \cite{gottesman1998heisenberg}\tabularnewline
\midrule
Displacement group & ~~~$\text{H}_{3}(\mathbb{R})^{n}$ (Heisenberg-Weyl)~~~ & ${\cal P}_{n}^{\text{rot}}$ (rotor Pauli \cite{aldaya1996algebraic}) & ${\cal P}_{n}$ (Pauli)\tabularnewline
\midrule
Symplectic group & $\text{Sp}_{2n}(\mathbb{R})$ & ~~~~~$\text{U}(1)^{n(n+1)/2} \rtimes\text{GL}_{n}(\mathbb{Z})$ (rotor Clifford)~~~~~ & $\text{Sp}_{2n}(\mathbb{Z}_{2})$ (Clifford)\tabularnewline
\midrule
Passive symplectic group & $\text{U}(n)=\text{Sp}_{2n}(\mathbb{R})\cap\text{SO}(2n)$ & $\mathbb{Z}_2 \wr S_n $ & $S_n$ \tabularnewline
\bottomrule
\end{tabular}

\caption{Canonical groups for oscillator, rotor, and qubit systems.
The displacement group shifts the canonical variables of each system, while the symplectic group preserves the displacement group.
The passive symplectic group of the oscillator is the symmetry group of the complex sphere formed by \(n\)-mode coherent states of the same energy, while its rotor and qubit counterparts are intersections of this group with the respective rotor and qubit symplectic groups.
}
\label{table:group_for_oscillator_and_rotor}
\end{table*}

\subsection{Clifford group generators}
\label{sec:Clifford_generator}

All planar rotor Clifford gates can be expressed as exponentials of quadratic combinations of the phase and angular momentum operators, except for the parity-flip operation. In the following, we define the generators of the rotor Clifford group \cite{oppenheim1997signals} and show how they transform rotor Pauli operators.

\begin{subequations}
\label{eq:rotor-generators}
\begin{enumerate}
    \item The \textsc{cnot} gate is defined as
    \begin{eqs}\textsc{cnot}_{1\rightarrow 2}= \int_{\text{U}(1)} Z(\phi) \otimes \ket{\phi} \bra{\phi} d\phi=\sum_{l\in\mathbb{Z}} \ket{l} \bra{l} \otimes X(l),
    \end{eqs}
    and it acts on Pauli operators as
        \begin{eqs}\label{eq:rotor_cnot}
            &\textsc{cnot}_{1\rightarrow 2} (X(1) \otimes \mathbb{I}) \textsc{cnot}_{1\rightarrow 2}^\dagger =X(1) \otimes X(1)\\
        &\textsc{cnot}_{1\rightarrow 2} (\mathbb{I} \otimes X(1)) \textsc{cnot}_{1\rightarrow 2}^\dagger=\mathbb{I}\otimes X(1)\\
        &\textsc{cnot}_{1\rightarrow 2} (\mathbb{I} \otimes Z(\phi)) \textsc{cnot}_{1\rightarrow 2}^\dagger =Z(-\phi)\otimes Z(\phi)\\
        &\textsc{cnot}_{1\rightarrow 2} (Z(\phi) \otimes \mathbb{I}) \textsc{cnot}_{1\rightarrow 2}^\dagger=Z(\phi) \otimes \mathbb{I}~.
        \end{eqs}

    \item The \textsc{quad} gate, $\textsc{quad}_\varphi= e^{i\varphi\hat{l}(\hat{l}+1)/2}$, acts on Pauli \(X\) operators as
    \begin{eqs}\label{eq:rotor_quad}
        \textsc{quad}_\varphi X(1)\textsc{quad}_\varphi ^\dagger= Z(\varphi) X(1)~,
    \end{eqs}
    commuting with all Pauli $Z$ operators.

    \item The \textsc{cphs} gate, $\textsc{cphs}_\varphi=e^{i \varphi \hat{l} \otimes \hat{l}}$, commutes with \(Z\) operators and acts on \(X\) operators as
    \begin{eqs}\label{eq:rotor_cphs}
        \textsc{cphs}_\varphi (X(1)\otimes \mathbb{I}) \textsc{cphs}_\varphi^\dagger= X(1) \otimes Z(\varphi),\\
        \textsc{cphs}_\varphi (\mathbb{I}\otimes X(1)) \textsc{cphs}_\varphi^\dagger= Z(\varphi) \otimes X(1).
    \end{eqs}

    \item The parity flip,  $\textsc{p} = \sum_\ell \left|-\ell\right\rangle\langle\ell| = \textsc{p}^\dagger$, acts as
    \begin{eqs}
        &\textsc{p} X(m) \textsc{p}= X(-m),\\
        &\textsc{p} Z(\phi) \textsc{p}= Z(-\phi)~,
    \end{eqs}
    flipping the sign of both position and momentum.
\end{enumerate}
\end{subequations}
All of the above operators can be generated by evolving under quadratic interactions, with the notable exception of the parity flip.
Nevertheless, the flip is a well-defined Clifford operation that can be realized in concrete systems (see Sec.~\ref{sec:josephson_junction}) and that plays a critical role in the study of the relation between rotor and number-phase rotor codes (see Sec. \ref{sec:homo_np_rotor_code}).

Despite the hybrid nature of rotor systems, a Pauli $Z(\Vec{\boldsymbol{\phi}}) X(\Vec{\boldsymbol{m}})$ can still be represented by a vector $\Vec{\boldsymbol{v}}$ \cite[Lemma 2.8]{prasad2008decomposition,bermejo2016normalizer},
\begin{eqs}\label{eq:rotor_pauli_vector}
    \Vec{\boldsymbol{v}}=\begin{pmatrix}
        \Vec{\boldsymbol{m}}_{\boldsymbol{v}}^T
|\Vec{\boldsymbol{\phi}}_{\boldsymbol{v}}^T
    \end{pmatrix}^T,
\end{eqs}
where $\Vec{\boldsymbol{m}}_{\boldsymbol{v}}$ is a $n$-dimensional integer-valued column vector, and $\Vec{\boldsymbol{\phi}}_{\boldsymbol{v}}$ is a $n$-dimensional $\mathbb{T}$-valued column vector.
The commutation relation between two Pauli strings $\Vec{\boldsymbol{u}}$ and $\Vec{\boldsymbol{v}}$ can then be represented as
\begin{eqs}
    &Z(\Vec{\boldsymbol{\phi}}_{\boldsymbol{u}}) X(\Vec{\boldsymbol{m}}_{\boldsymbol{u}}) Z(\Vec{\boldsymbol{\phi}}_{\boldsymbol{v}} ) X(\Vec{\boldsymbol{m}}_{\boldsymbol{v}})\\
    &= e^{-i \Vec{\boldsymbol{u}}^T \Lambda \Vec{\boldsymbol{v}}} Z(\Vec{\boldsymbol{\phi}}_{\boldsymbol{v}}) X(\Vec{\boldsymbol{m}}_{\boldsymbol{v}}) Z(\Vec{\boldsymbol{\phi}}_{\boldsymbol{u}}) X(\Vec{\boldsymbol{m}}_{\boldsymbol{u}}),
\end{eqs}
where $e^{-i \Vec{\boldsymbol{u}}^T \Lambda \Vec{\boldsymbol{v}}}$ is the phase factor captured by the symplectic inner product between $\Vec{\boldsymbol{u}}$ and $\Vec{\boldsymbol{v}}$,
\begin{eqs}
    &\Vec{\boldsymbol{u}}^T \Lambda \Vec{\boldsymbol{v}}=\Vec{\boldsymbol{m}}^T_{\boldsymbol{u}} \Vec{\boldsymbol{\phi}}_{\boldsymbol{v}} -\Vec{\boldsymbol{\phi}}^T_{\boldsymbol{u}} \Vec{\boldsymbol{m}}_{\boldsymbol{v}},\quad \Lambda= \begin{pmatrix}
        0 & \mathbb{I}_{n\times n} \\
        -\mathbb{I}_{n\times n} &0 \end{pmatrix}~.
\end{eqs}

Each Clifford circuit $U$ can be represented by a symplectic matrix $Q_U$ that transforms a Pauli string $\Vec{v}$ as
\begin{eqs}\label{eq:symplectic_coefficient}
    U \Vec{\boldsymbol{v}} U^\dagger =Q_U\Vec{\boldsymbol{v}}=\begin{pmatrix}
        \Vec{\boldsymbol{m}}_{Q_U \boldsymbol{v}}^T|\Vec{\boldsymbol{\phi}}_{Q_U \boldsymbol{v}}^T
    \end{pmatrix}^T~,
\end{eqs}
for some quadrature transformation \(Q_U\).
Since \(U\) is a Clifford circuit, $\Vec{\boldsymbol{m}}_{Q_U \boldsymbol{v}}$ should be a $n$-dimensional integer-valued column vector, and $\Vec{\boldsymbol{\phi}}_{Q_U \boldsymbol{v}}$ should be a $n$-dimensional $\mathbb{T}$-valued column vector. Any quadrature transformation $Q$ also has to satisfy
\begin{eqs}\label{eq:symplecitc}
    Q^T \Lambda Q=\Lambda\quad\quad\text{(symplectic condition)}
\end{eqs}
because it preserves the rotor commutation relations.
This implies that the rotor Clifford group is a particular subgroup of the oscillator Clifford group $\text{Sp}_{2n}(\mathbb{R})$ that preserves the angle-integer form of the Pauli vectors $\vec{\boldsymbol{v}}$.

\subsection{Clifford group structure}\label{sec:Clifford_group_structure}

Most generally, a rotor quadrature transformation can be written as
\begin{eqs}\label{eq:rotor_symplectic}
    Q=\begin{pmatrix}
        Q_{XX} & 0\\
        Q_{XZ} & Q_{ZZ}
    \end{pmatrix}~.
\end{eqs}
The upper-left block is a general linear transformation over the integers, \(Q_{XX} \in \text{GL}_n(\mathbb{Z})\); all such transformations have determinant \(\pm1\) (i.e., are \textit{unimodular}).
The upper-right block must be zero because a sum of a continuous variable and a discrete variable is not discrete, thereby violating the discreteness of \(\Vec{\boldsymbol{m}}\).
The lower-left block does not have to be zero since there are no discreteness constraints on the angles \(\Vec{\boldsymbol{\phi}}\).

Imposing the symplectic condition \eqref{eq:symplecitc} yields the following constraints on the block matrices:
\begin{eqs}\label{eq:symplectic_condition}
Q_{ZZ}^{T}Q_{XX}=Q_{XX}^{T}Q_{ZZ}&=\mathbb{I}_{n\times n}\\
Q_{XX}^{T}Q_{XZ}-(Q_{XX}^{T}Q_{XZ})^{T}&\equiv0\Mod2\pi
\end{eqs}
The first equality of Eq.~\eqref{eq:symplectic_condition} requires that $Q_{ZZ}=(Q_{XX}^{T})^{-1}$, indicating that the bottom-right block $Q_{ZZ}$ also has to be unimodular. The second equality requires $Q_{XX}^T Q_{XZ}$ to be symmetric. \textcolor{black}{These requirements are derived directly from the definition of the rotor Clifford group. They constrain the most general form of the rotor Clifford group element to be
\begin{eqs}\label{eq:rotor_symplectic_rep}
    Q=\begin{pmatrix}
        Q_{XX} & 0\\
        (Q_{XX}^T)^{-1}C & (Q_{XX}^T)^{-1}
    \end{pmatrix}~,
\end{eqs}
in which $Q_{XX} \in \text{GL}_n(\mathbb{Z})$ and $C$ is a \ensuremath{\mathbb{T}}-valued symmetric matrix whose addition generates $\text{U}(1)^{n(n+1)/2}$.}

\textcolor{black}{The generators of the symplectic group from Eqs.~\eqref{eq:rotor-generators} are sufficient to generate any rotor Clifford group element in the form of Eq.~\eqref{eq:rotor_symplectic_rep}.}
They fall into two classes: \textit{block diagonal} generators that do not mix phase and angular-momentum degrees of freedom, and \textit{block off-diagonal} generators that do mix them.
These two respective sets naturally generate two subgroups of symplectic transformations,\footnote{Although it is known that $\text{GL}_n(\mathbb{Z})$ can be generated by as few as $2$ generators for any order of $n\geq 4$, we use \textsc{cnot} and $\textsc{p}$ because they are local gates that are physically preferred.}
\begin{eqs}\label{eq:clifford_generator}
H&=\Bigg\{\begin{pmatrix}A & 0\\
0 & (A^{T})^{-1}
\end{pmatrix} \Bigg|,~A\in\text{GL}_{n}(\mathbb{Z})\Bigg\}=\left\langle \textsc{cnot},\textsc{p}\right\rangle ,\\N&=\Bigg\{\begin{pmatrix}\mathbb{I}_{n\times n} & 0\\
C & \mathbb{I}_{n\times n}
\end{pmatrix} \Bigg|,~C~\text{ is \ensuremath{n\times n} \ensuremath{\mathbb{T}}-valued symmetric matrix}\Bigg\},\\&=\left\langle \textsc{quad},\textsc{cphs}\right\rangle .
\end{eqs}
The matrices in the \textit{diagonal} or \textit{CSS} subgroup $H$ form a (reducible) representation of $\text{GL}_n (\mathbb{Z}) $; this group does not mix positions with momenta.
The subgroup $N$ is the addition group of $n\times n$ symmetric matrices over $\mathbb{T}$, representing the Lie group $\text{U}(1)^{n(n+1)/2} $.

It is straightforward to identify how these generators conjugate a Clifford operator defined by the symplectic transformation \(Q\); we tabulate their actions in Table.~\ref{table:clifford_Gate}.

\begin{table*}[t]
\centering
\begin{tabular}{ |p{2cm}|p{6cm}|p{6cm}|  }

\hline
\multicolumn{3}{|c|}{Matrix representation of rotor Clifford gates} \\
\hline
Gate& Left multiplication & Right multiplication \\
\hline
$\textsc{swap}_{ij}$ & Swap the $i$th row and the $j$th row of $Q_{XX}$, and swap the $i$th row and the $j$th row of $(Q_{XZ}|Q_{ZZ})$. & Swap the $i$th column and the $j$th column of $Q_{XX}$, and swap the $i$th column and the $j$th column of $Q_{XZ}$ and $Q_{ZZ}$. \\
\hline
$\textsc{cnot}_{ij}$ & Add the $i$th row of $Q_{XX}$ to the $j$th row of $Q_{XX}$, and subtract the $j$th row of $Q_{ZZ}$ from the $i$th row of $Q_{ZZ}$. & Add the $j$th column of $Q_{XX}$ to the $i$th column of $Q_{XX}$, and subtract the $i$th column of $Q_{ZZ}$ from the $j$th column of $Q_{ZZ}$.  \\
\hline
$\textsc{p}_i$    &Multiply the $i$th rows of $Q_{XX}, Q_{XZ}, Q_{ZZ}$ by $-1$ & Multiply the $i$th rows of $Q_{XX}, Q_{XZ}, Q_{ZZ}$ by $-1$ \\
\hline
$\textsc{quad}_{\varphi,i}$ & Add $\varphi$ times the $i$th row of $Q_{XX}$ to the $i$th row of $Q_{XZ}$   &  Add $\varphi$ times the $i$th column of $(Q_{XX}^T)^{-1}$ to the $i$th row of $Q_{XZ}$ \\
\hline
$\textsc{cphs}_{\varphi,ij}$ &Add $\varphi$ times the $i$th row of $Q_{XX}$ to the $j$th row of $Q_{XZ}$, and add $\varphi$ times the $j$th row of $Q_{XX}$ to the $i$th row of $Q_{XZ}$  & Add $\varphi$ times the $i$th column of $(Q_{XX}^T)^{-1}$ to the $j$th column of $Q_{XZ}$, and add $\varphi$ times the $j$th column of $(Q_{XX}^T)^{-1}$ to the $i$th column of $Q_{XZ}$     \\

\hline
\end{tabular}
\caption{The matrix description of rotor Clifford gates. The swap gate can be realized by $\textsc{swap}_{ij}=\textsc{p}_j\textsc{cnot}_{ji}^\dagger\textsc{cnot}_{ij}^\dagger\textsc{cnot}_{ji}$.}
\label{table:clifford_Gate}
\end{table*}

Having mapped all generators of the Clifford group into matrix form, we prove that the combination of the above two representations forms the following group.

\begin{theorem}
    Rotor symplectic transformations form the group \(\text{U}(1)^{n(n+1)/2} \rtimes\emph{GL}_n(\mathbb{Z})\).
\end{theorem}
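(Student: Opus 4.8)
The plan is to establish the claimed semi-direct product structure by analyzing the general form of the rotor symplectic matrix $Q$ in Eq.~\eqref{eq:rotor_symplectic_rep} and exhibiting the two subgroups $H$ and $N$ of Eq.~\eqref{eq:clifford_generator} as the semi-direct factors. First I would verify that every rotor symplectic transformation admits the factorization $Q = N_Q \cdot H_Q$, where $H_Q = \mathrm{diag}(Q_{XX}, (Q_{XX}^T)^{-1}) \in H$ is read off from the upper-left block, and $N_Q$ is the block lower-triangular matrix with off-diagonal block $C' = (Q_{XX}^T)^{-1} C (Q_{XX})^{-1}$ (or, depending on the order chosen, simply $C$), which lies in $N$ because $C$ symmetric implies its conjugate by $Q_{XX}$ is symmetric. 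This shows $Q \in NH$, so that $N$ and $H$ together generate the whole group; combined with the statement (assumed from the text) that the generators \textsc{cnot}, \textsc{p}, \textsc{quad}, \textsc{cphs} produce exactly the matrices of the form~\eqref{eq:rotor_symplectic_rep}, this pins down the group as a set.

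Next I would check the three defining properties of an internal semi-direct product $N \rtimes H$: (i) $N \cap H = \{\mathbb{I}\}$, which is immediate since a matrix in $H$ has off-diagonal block zero while a matrix in $N$ has identity diagonal blocks, so their intersection forces $Q_{XX} = \mathbb{I}$ and $C = 0$; (ii) $N$ is normal, which I would verify by the explicit conjugation
\begin{eqs}
H_A \begin{pmatrix}\mathbb{I} & 0\\ C & \mathbb{I}\end{pmatrix} H_A^{-1} = \begin{pmatrix}\mathbb{I} & 0\\ A^{-T} C A^{-1} & \mathbb{I}\end{pmatrix}~,
\end{eqs}
where $H_A = \mathrm{diag}(A, A^{-T})$; since $A^{-T} C A^{-1}$ is again a $\mathbb{T}$-valued symmetric matrix, conjugation by $H$ preserves $N$, and conjugation by elements of $N$ itself preserves $N$ trivially because $N$ is abelian, so $N$ is normal in $\langle N, H\rangle$; (iii) $NH$ exhausts the group, already shown. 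The action of $H \cong \mathrm{GL}_n(\mathbb{Z})$ on $N \cong \mathrm{U}(1)^{n(n+1)/2}$ is thus $A \cdot C = A^{-T} C A^{-1}$, i.e.\ the symmetric-square representation, and this is faithful enough that the product is genuinely semi-direct rather than direct (the action is nontrivial for $n \geq 1$ once $A \neq \pm\mathbb{I}$, and even for $A = -\mathbb{I}$ one must check it acts trivially, consistent with $(-\mathbb{I})^{-T} C (-\mathbb{I})^{-1} = C$).

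Finally I would identify the abstract isomorphism types of the factors: $N$ is the additive group of $n \times n$ symmetric matrices over $\mathbb{T} = \mathbb{R}/2\pi\mathbb{Z}$, which has $n(n+1)/2$ independent entries each valued in $\mathrm{U}(1)$, hence $N \cong \mathrm{U}(1)^{n(n+1)/2}$ as claimed; and $H$ is the image of the reducible representation $A \mapsto \mathrm{diag}(A, A^{-T})$ of $\mathrm{GL}_n(\mathbb{Z})$, which is faithful (the map $A \mapsto A$ already is), so $H \cong \mathrm{GL}_n(\mathbb{Z})$. Assembling (i)--(iii) gives $\langle N, H\rangle = N \rtimes H = \mathrm{U}(1)^{n(n+1)/2} \rtimes \mathrm{GL}_n(\mathbb{Z})$, and by the generation statement this is the full rotor symplectic group. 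The main obstacle I anticipate is the bookkeeping in step one: confirming that the block constraints~\eqref{eq:symplectic_condition} from the symplectic condition force exactly the parametrization~\eqref{eq:rotor_symplectic_rep} with $C$ symmetric modulo $2\pi$ (the ``modulo $2\pi$'' subtlety is what makes the upper factor compact $\mathrm{U}(1)$'s rather than $\mathbb{R}$'s), and checking that the $\mathrm{GL}_n(\mathbb{Z})$-action on this quotient is well-defined — i.e.\ that $A^{-T} C A^{-1}$ depends only on $C \bmod 2\pi$, which holds because $A$ is integer-valued. Everything else is routine matrix multiplication.
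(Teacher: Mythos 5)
Your proof is correct and follows essentially the same route as the paper's: starting from the block form \eqref{eq:rotor_symplectic_rep}, factor into $N$ and $H$, check normality of $N$ by explicit conjugation, and conclude the semi-direct product structure. You fill in two small points the paper leaves implicit --- the trivial intersection $N\cap H = \{\mathbb{I}\}$ and the abstract identification of the factors as $\text{U}(1)^{n(n+1)/2}$ and $\text{GL}_n(\mathbb{Z})$ --- and you streamline the normality check by conjugating only by $H$-elements and invoking that $N$ is abelian, whereas the paper conjugates by a general $Q\in G$ and also runs an inductive argument to establish the factorization $G = HN$ that you instead read off directly from \eqref{eq:rotor_symplectic_rep}.
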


\begin{proof}

We first prove that the two subgroups form a group and then proceed with showing that \(N\) is normal.

\begin{enumerate}
    \item \textit{Associativity}: The associativity of group $G$ follows from the associativity of matrix multiplication.\\

\item \textit{Identity}: The identity element of group $G$ is
        \begin{eqs}
            e= \begin{pmatrix}
                \mathbb{I}_{n\times n } & 0\\
                0 & \mathbb{I}_{n\times n }
            \end{pmatrix},
        \end{eqs}
     which is the identity matrix.

\item \textit{Inverse}: We first prove that $\forall g\in G$, $\exists \mathcal{A} \in H$ and $\mathcal{C} \in N $ such that $g=\mathcal{A}\mathcal{C}$. Without loss of generality, a generic element $g$ of $G$ can be written as product of matrices in $H$ and $N$ as
     \begin{eqs}
         g=\mathcal{A}_1 \mathcal{C}_1 \cdots \mathcal{A}_n \mathcal{C}_n
     \end{eqs}
for some constant \(n\).
where $\mathcal{A}_1,...,\mathcal{A}_n \in H$ and $\mathcal{C}_1,...,\mathcal{C}_n \in G$. This can be proved by induction. The $k=1$ case is obvious. Suppose for $k=n$, $g=\mathcal{A}_1 \mathcal{C}_1 \cdots \mathcal{A}_n \mathcal{C}_n=\mathcal{A}^{(n)}\mathcal{C}^{(n)}$. Then $g^{\prime}=\mathcal{A}_1 \mathcal{C}_1 \cdots \mathcal{A}_{n+1} \mathcal{C}_{n+1}=\mathcal{A}^{(n)} \mathcal{C}^{(n)}\mathcal{A}_{n+1} \mathcal{C}_{n+1}=\mathcal{A}^{(n+1)}\mathcal{C}^{(n+1)}$, where $\mathcal{A}^{(n+1)}=\mathcal{A}^{(n)}\mathcal{A}_{n+1}$ and $\mathcal{C}^{(n+1)}=\mathcal{A}_{n+1}^{-1}\mathcal{C}^{(n)}\mathcal{A}_{n+1}\mathcal{C}_{n+1}$. To show $\mathcal{C}^{(n+1)}\in N$ we need to show $\mathcal{A}_{n+1}^{-1}\mathcal{C}^{(n)}\mathcal{A}_{n+1}\in N$, which is proved in Eq.~\eqref{eq:normality_clifford}.

Then the existence of the inverse of $g\in G$ follows from the existence of the matrix inverse of the element of $\mathcal{A}\in H$ and $\mathcal{C}\in N$, and $g^{-1}=\mathcal{C}^{-1}\mathcal{A}^{-1}$.

    \begin{widetext}
\item \textit{Normality}: $\forall Q\in G$ and $\forall h\in N$,
\begin{eqs}\label{eq:normality_clifford}
    QhQ^{-1}=
    &\begin{pmatrix}
        Q_{XX} & 0\\
        Q_{XZ} & (Q_{XX}^T)^{-1}
    \end{pmatrix}  \begin{pmatrix}
        \mathbb{I}_{n \times n} & 0\\
        C_h & \mathbb{I}_{n \times n}
    \end{pmatrix}
    \begin{pmatrix}
        Q_{XX}^{-1} & 0\\
        -Q_{XZ} & Q_{XX}^T
    \end{pmatrix}\\
    =& \begin{pmatrix}\mathbb{I}_{ n\times n} & 0\\
    [Q_{XZ} Q_{XX}^{-1} + (Q_{XX}^{-1})^T C_h Q_{XX}^{-1}- (Q_{XX}^{-1})^T Q_{XZ}] \Mod 2\pi & \mathbb{I}_{n\times n}\end{pmatrix}\\
    =& \begin{pmatrix}\mathbb{I}_{ n\times n} & 0\\
     [(Q_{XX}^{-1})^T C_h Q_{XX}^{-1} ] \Mod 2\pi & \mathbb{I}_{n\times n}\end{pmatrix} = \begin{pmatrix}\mathbb{I}_{ n\times n} & 0\\
     C' & \mathbb{I}_{n\times n}\end{pmatrix}  \in N.
\end{eqs}
\end{widetext}
In the last line of Eq.~\eqref{eq:normality_clifford}, since $(Q_{XX}^{-1})^T C_h Q_{XX}^{-1}$ is a symmetric matrix, $C^{\prime}\equiv[(Q_{XX}^{-1})^T C_h Q_{XX}^{-1}] \Mod 2\pi$ is also a $\mathbb{T}$-valued symmetric matrix.
Above derivation uses the identity
\begin{eqs} \label{eq:modular_arithmetic}
    (A (C \Mod 2\pi))\Mod 2\pi = (AC) \Mod 2\pi,
\end{eqs}

where $A $ is a unimodular matrix and $C$ is a $\mathbb{T}$-valued matrix. This is because this matrix multiplication composes of multiplications of modular variable with integers and summations of two modular variables. Due to the fundamental property of modulation, $ (c+d)\Mod 2\pi = [(c \Mod 2\pi)+(d \Mod 2\pi)] \Mod 2\pi$, $\forall c,d\in \mathbb{T}$, and $ [a (b \Mod 2\pi)]\Mod 2\pi =(ab) \Mod 2\pi$, $ \forall a\in \mathbb{Z},~b\in\mathbb{T}$, above equation is true.

\end{enumerate}

\end{proof}

\subsection{No squeezing for planar rotors}\label{sec:non_exist_squeezing}

In this subsection, we recap that (unitary) squeezing is not a Clifford operation for rotors because it is an automorphism of $\mathbb{R}^{n}$ but not of $\mathbb{T}^{n}$ or $\mathbb{Z}^{n}$.

The squeezing of a single-mode harmonic oscillator squeezes one quadrature and dilates its conjugate pair, \(\hat{q}\rightarrow e^{r}\hat{q}\) and \(\hat{p}\rightarrow e^{-r}\hat{p}\) for some real parameter \(r\).
It is a symplectic automorphism in $\mathbb{R}$ so is an element of the oscillator Clifford group.
Notably, squeezing preserves the spectrum of both the position and momentum operators, which is all of the reals.

For rotor systems, a map multiplying the angular position by a constant \(c\), \(\theta \to c\theta\), can only be an automorphism for \(c = \pm1\). Hence, squeezing a rotor is impossible.

For multiple rotors, the automorphism of $\mathbb{T}^n$ is $\text{GL}_n(\mathbb{Z})$. Because any matrix of $\text{GL}_n(\mathbb{Z})$ has determinant  $\pm 1$, there should not exist an overall squeezing in the phase basis. Because a $\text{GL}_n(\mathbb{Z})$ matrix can be diagonalized via unimodular matrices, and unimodular matrices have determinant $\pm 1$, this implies that the resulting diagonal matrix should also have determinant $\pm 1$. Since the entries of it should be integers, the eigenvalues can only be $\pm 1$. This means for multiple rotor modes, there does not exist an automorphism that squeezes one collective mode while stretches another collective mode under the phase basis.

The angular momentum basis has group structure $\mathbb{Z}^n$ and its group automorphism group is $\text{GL}_n(\mathbb{Z})$ as well. The same argument shows that squeezing is not an automorphism of this basis either.

Not having unitary squeezing is a generic feature in other systems as well, e.g., modular-qudit systems whose states are valued in $\mathbb{Z}_d$. More generally, if one of the quadratures is valued in a compact group, squeezing should not be able to be realized via a unitary operation, as it cannot be an automorphism of the group.

\subsection{Passive symplectic subgroup}\label{sec:passive_transformation_group}

The oscillator symplectic group has an important subgroup --- the group of \textit{passive} transformations that preserve the total energy (i.e., photon number) of the oscillators \cite{serafini2017quantum}.
\textcolor{black}{The passive symplectic group for rotor systems can be similarly defined as preserving the total energy of the rotors. For $n$ identical rotors, the total energy should be proportional to the sum of angular momentum squared of each rotor, $\sum_i l_i^2$.}
This yields the corresponding passive symplectic group of the rotor.

The passive symplectic group of the oscillator preserves the total photon number, $\sum_i \hat a_i^\dagger \hat{a}_i=n/2+\sum_i(\hat q_i^2+\hat p_i^2)$.
Collecting positions and momenta into a \(2n\)-dimensional vector \(\vec v\), we see that passive transformations have to preserve the inner product \(\vec v^T \vec v\).
This constraint defines a \(2n\)-dimensional real sphere in phase space, meaning that any passive transformation has to be an element of the sphere's proper-rotation symmetry group, \(\text{SO}(2n)\).
Taking the intersection of this group with the symplectic group yields $\text{Sp}_{2n}(\mathbb{R})\cap\text{SO}(2n) \cong \text{U}(n)$, the \(n\)-dimensional unitary group.

The real \(2n\)-dimensional sphere can equivalently be thought of as a complex \(n\)-dimensional sphere, whose corresponding constraint can be formulated using the vector of annihilation operators, \(\vec a = (\hat a_1,\hat a_2,...,\hat a_n)\).
Passive transformations form that sphere's symmetry group, \(\text{U}(n)\).
Since coherent states are eigenstates of annihilation operators, they can be thought of as points on said sphere.
Passive transformations rotate these points, preserving the tensor product structure of coherent states on different modes,
\begin{eqs}
    U\ket{\alpha_1}\otimes...\ket{\alpha_n}=\ket{\tilde\alpha_1}\otimes...\ket{\tilde\alpha_n}~,
\end{eqs}
where $\tilde{\alpha}_i=\sum_j U_{ij}\alpha_j$.

In the symplectic representation, a general passive symplectic element can be written as
\begin{eqs}
    S=\begin{pmatrix}
        A & B\\
        -B & A
    \end{pmatrix},~~ \text{where}~~~ \begin{aligned}
        & AB^T-BA^T=0,\\
        & AA^T+BB^T=\mathbb{I}_{n \times n}.
    \end{aligned}
\end{eqs}
In the rotor case, according to the analysis in Sec.~\ref{sec:Clifford_group_structure}, $A$ should be in $\text{GL} (n,\mathbb{Z})$ and the upper-right block $B$ must be empty. Hence, rotor passive transformations are written as
\begin{eqs}
    S=\begin{pmatrix}
        A & 0\\
        0 & A
    \end{pmatrix},~~~ \text{where}~~~ \begin{aligned}
        &AA^T=\mathbb{I}_{n \times n},\\
        &A \in \text{GL}_n (\mathbb{Z}).
        \end{aligned}
\end{eqs}

In words, to preserve the square of the total angular momentum, $\sum_i l_i^2$, a passive transformation should be an element of $\text{O}(n)$. 
In the meantime, $l_i$'s are integers, so the transformation should take value in $\text{GL}_n (\mathbb{Z})$.
These conditions fix the rotor passive transformation group to be the signed symmetric group (\textit{a.k.a.}\ the hyperoctahedral group),
\begin{eqs}\label{eq:rotor_passive}
    \text{O}(n)\cap\text{GL}_n (\mathbb{Z})=\mathbb{Z}_2\wr S_n~,
\end{eqs}
generated by \textsc{swap} and parity flip $\textsc{p}$ (where \(\wr\) is the wreath product).

While the oscillator group is a rich and complex amalgamation of \(\mathsf{SU}(2)\) mode-mixing transformations (beam splitters) and \(\text{U}(1)\) single-mode rotations (phase shifters) \cite{serafini2017quantum}, its rotor counterpart consists of only permutations and momentum parity flips.
A parity flip is inherited from the \(\pi\)-phase shift, while a \textsc{swap} is passed down from the 50/50 beam splitter.

In passing, we note that the lack of beam-splitters and squeezing precludes us from mapping any interesting Gaussian bosonic channels \cite{lloyd2012gaussian} into the rotor state space.

\section{Rotor ``Gaussian'' states}\label{sec:rotor_Gaussian_state}

Gaussian states play a fundamental role in many aspects of quantum physics of harmonic oscillator systems. They are a class of states which can be described just by their first and second moments. The evolution of bosonic Gaussian states under symplectic unitaries can be efficiently simulated by classical algorithms via tracking the changes of their first and second moments \cite{bartlett2002efficient,mari2012positive}.
In the cases of planar rotor, the symplectic representation of Clifford circuits also gives us a classically efficient simulation algorithm \cite{bermejo2016normalizer}.

The nullifier states, coherent states, squeezed states, as well as thermal states of free Hamiltonians, are all important examples of oscillator Gaussian states.
We wish to enumerate states of the rotor system with similar properties to oscillator Gaussian states.

In this section, we discuss the analogy of nullifier states and coherent states in rotor systems and compare them to their counterparts in oscillator systems.
We also discuss the transformation of Josephson-junction rotor Hamiltonians under the rotor Clifford group.

\subsection{Rotor nullifier states}\label{sec:rotor_nullifier_states}

For oscillator systems, the position and momentum eigenstates are called \textit{nullifier states} \cite{gu2009quantum}.
They are $\delta$-functions localized at given position or momentum values and can thought of as infinitely squeezed coherent states.
Oscillator nullifier states are an important class of oscillator Gaussian states and have been studied in the context of continuous-variable quantum computing and error correction \cite{gu2009quantum,menicucci2011graphical,chen2014experimental,vuillot2019toric,asavanant2019generation,xu2023qubit}.

\textcolor{black}{Single-rotor nullifier states are the angle phase states and the angular momentum eigenstates. However, only rotor angular momentum eigenstates are proper Gaussian states for rotors while the phase states are not Gaussian \cite{rigas2010non}.
One way to understand this is to project oscillator Gaussian states into the embedded rotor subspace and look for those states that remain Gaussian after projection. 
Only momentum states satisfy this constraint, as rotor position states --- an infinite superposition of periodically identified oscillator position states --- are no longer Gaussian.}

A multi-rotor nullifier state is defined by a vector \(\Vec{\boldsymbol{l}}=(l_1,...,l_n)^T\) denoting the momentum of each rotor,
\begin{eqs}
\ket{\Vec{\boldsymbol{l}}}=\bigotimes_{j=1}^{n}\ket{\hat{l}_{j}=l_{j}}\ensuremath{}~.
\end{eqs}
We show that rotor nullifier states are closed under the Clifford operation.

If we apply a rotor Clifford circuit $U$ on $\ket{\Vec{\boldsymbol{l}}}$, the evolution of the eigenoperator is
\begin{eqs}\label{eq:angular_momentum_heisenberg}
    \hat{\mathbf{l}} \rightarrow U \hat{\mathbf{l}} U^\dagger.
\end{eqs}
This can be represented by multiplying symplectic matrices on the quadrature vector. Importantly, we only need to track the changes of angular momentum operators since the rotor Clifford group does not mix in (continuous) positions into (integer) momenta.
 In other words, since the group has a semi-direct product structure, we can always write an element as $g=g_Hg_N$, where $g_H \in H$, $g_N \in N$, and Eq.~\eqref{eq:angular_momentum_heisenberg} becomes
\begin{eqs}\label{eq:mom-rot}
    &U  \hat{\mathbf{l}} U^\dagger= g_H g_N  \hat{\mathbf{l}}
    = g_H  \hat{\mathbf{l}}
    = A_U^{-1} \hat{\mathbf{l}}.
\end{eqs}
Here we use $g_N \hat{\mathbf{l}}=\hat{\mathbf{l}}$, because \textsc{quad} and \textsc{cphs} gates (block off-diagonal gates) all commute with $\hat{\mathbf{l}}$.

Re-expressing the above in the Schrodinger picture, any nullifier state \(\ket{\Vec{\boldsymbol{l}}}\) --- an eigenstate of \(\hat{\mathbf{l}}\) with eigenvalues \(\Vec{\boldsymbol{l}}\) --- transforms into the state \(U \ket{\Vec{\boldsymbol{l}}}\) --- an eigenstate of \(A_U^{-1} \hat{\mathbf{l}}\) with eigenvalues \(A_U^{-1}\Vec{\boldsymbol{l}}\).
This new state is still a tensor product of angular momentum eigenstates. Therefore, rotor nullifier states are closed under the action of rotor Clifford group and this evolution is fully captured by symplectic transformations.

\subsection{Coherent states}\label{sec:rotor_coherent_state}

In this subsection, we review rotor coherent states \cite{kowalski1996coherent} and show that they arise from projecting oscillator coherent states into the embedded rotor subspace.
Per the discussion from Sec.~\ref{sec:non_exist_squeezing}, squeezing such coherent states cannot be implemented unitarily.
However, we show that squeezed rotor states previously introduced in Ref.~\cite{kowalski2002uncertainty} are equivalent to applying an adjustable regularizer $e^{-\Delta \hat{l}^2/2}$ to a fiducial coherent state.
This ``regularization'' is similar to the manner of realizing finite-energy GKP states \cite{menicucci2014fault}.

Oscillator coherent states are (right) eigenstates of the oscillator annihilation operator. A class of rotor coherent states \cite{kowalski1996coherent} can be defined analogously by the equation
\begin{eqs}
    e^{i \hat{a}} \ket{\xi}=e^{i(\hat{\theta}+i\hat{l})} \ket{\xi}.
\end{eqs}
Above, $\hat a$ can be thought of as an effective lowering operator for the rotor.
It has to be in the exponent because $\hat{\theta}$ is not well-defined as a standalone operator. Using the commutation relation between $\hat{\theta}$ and $\hat{l}$,
\begin{eqs}
    e^{i \hat{a}} =e^{i \hat{\theta}} e^{-\hat{l}-\frac{1}{2}}=X(1) e^{-\hat{l}-\frac{1}{2}}.
\end{eqs}

Expressing rotor coherent states in the momentum basis yields
\begin{eqs}\label{eq:rotor_coherent_eigenstate}
    &\ket{\xi} =\sum_{l \in \mathbb{Z}} \xi^{-l} e^{-l^2/2} \ket{l},\\
    &e^{i \hat{a}} \ket{\xi} =\xi \ket{\xi}, \quad \xi \in \mathbb{C}-\{0\},\\
    &e^{- (\ln{\chi}) \hat{l}} \ket{\xi} =\ket{\chi \xi},\quad \chi>0.
\end{eqs}
Note that $\ket{\xi=0}$ is not allowed because its wavefunction diverges.
In this form, we can see that rotor coherent-state coefficients are evaluations of the momentum Gaussian wavefunction of the oscillator coherent state at integer momenta.
Contrary to the oscillator case, the (rotor) Wigner functions of coherent states $\ket{\xi}$ have negative parts~\cite{rigas2010non}.

One can define a rotor displacement operator $D(\alpha)$ as
\begin{eqs} \label{eq:rotor_displacement}
    D(\alpha)&= \exp (\frac{\alpha}{2} \hat{a}^\dagger- \frac{\alpha^*}{2} \hat{a})= \exp (i(d \hat{\theta}-c \hat{l})), \\
    &= e^{-icd/2} X(d) Z(-c)\\
    & \text{where } \alpha=c+id, \quad c\in \mathbb{T}, ~ d \in \mathbb{Z}.
\end{eqs}
Because of the discrete nature of the angular momentum basis, we can only apply discrete displacement along the angular momentum direction.
Displacement operators are an alternative way to express the rotor Pauli group \(\mathcal{P}^{\text{rot}}_n\).
The difference between the displacement operators of rotors and oscillators is that $\alpha\in\mathbb{C}$ for oscillator systems, whereas $\text{Im}(\alpha)\in \mathbb{Z}$ in rotors.

Conjugating $e^{i \hat{a}}$ by $D(\alpha)$ performs a displacement transformation on $e^{ i \hat{a}}$:
\begin{eqs}
D(\alpha)e^{i\hat{a}}D(\alpha)^{\dagger}&=X(d)Z(-c)e^{i\hat{a}}Z(-c)^{\dagger}X(d)^{\dagger}\\&=e^{i(\hat{a}-\alpha)}.
\end{eqs}
When the displacement operator $D(\alpha)$ acts on a rotor coherent state $\ket{\xi}$, we have 
\begin{eqs}\label{eq:coherent-disp}
    D(\alpha) \ket{\xi} = (\xi e^{\alpha/2})^d\ket{e^{i \alpha} \xi},
\end{eqs}
giving rise to another coherent state, up to a constant factor. 
In contrast to oscillator coherent states, here the displaced state has to be renormalized relative to the initial state when \(d\) is nonzero.

As discussed in Sec.~\ref{sec:passive_transformation_group}, the transformations that leave the direct product structure of rotor coherent states invariant form the group \textcolor{black}{$\mathcal{P}^{\text{rot}}_n\rtimes(\mathbb{Z}_2\wr S_n)$}. The action of the Pauli part $\mathcal{P}_n^{\text{rot}}$ is shown in Eq.~\eqref{eq:coherent-disp}.
The action of the permutation group $S_n$ swaps the order of the rotors in the direct product sequence.
The action of the $\mathbb{Z}_2$ part, generated by parity $\textsc{p}$, acts on a rotor coherent state as
\begin{eqs}
    \textsc{p}\ket{\xi}=\ket{\xi^{-1}}.
\end{eqs}

Rotor coherent states cannot be used to approximate rotor nullifier states by applying a squeezing operator, because squeezing is not a unitary operation for rotor systems (see Sec.~\ref{sec:non_exist_squeezing}).
However, one can still introduce one more parameter to the rotor coherent state to ``simulate'' squeezing,
\begin{eqs}
    \ket{\xi=1}_\Delta \propto E_\Delta \ket{\theta=0}=\sum_{l \in \mathbb{Z}} e^{-\frac{\Delta l^2}{2}} \ket{l},
\end{eqs}
where $E_\Delta(\hat{l})=\exp(-\Delta \hat{l}^2/2)$ is called a regularizer, and \(\Delta\in[0,\infty)\) is a regularization parameter.

The above states simulate squeezing by smoothly interpolating between states of fixed position and momentum. If we take $\Delta \rightarrow 0$, then the ``squeezed'' rotor coherent state approaches the phase state as $\ket{\xi=1}_{\Delta\rightarrow 0}\rightarrow \ket{\theta=0}$. If we take $\Delta \rightarrow \infty$, then the ``squeezed '' rotor coherent state approaches the angular momentum state  $\ket{\xi=1}_{\Delta\rightarrow \infty}\rightarrow \ket{l=0}$.

Following the finite-energy approach in Refs.~\cite{menicucci2014fault,duivenvoorden2017single,royer2020stabilization}, we define the finite-energy version of Pauli operators via
\begin{eqs}
    X(m)_\Delta&= E_{\Delta}(\hat{l})  X(m) E_{\Delta}^{-1}(\hat{l})=X(m) E_\Delta (\hat{l}+m) E_\Delta^{-1}(\hat{l}),\\
    Z(\phi)_\Delta &= E_{\Delta}(\hat{l})  Z(\phi) E_{\Delta}^{-1}(\hat{l})= Z(\phi).
\end{eqs}
Though the finite-energy version of operators are non-unitary, they follow the same Pauli algebra as regular Pauli operators.
The Pauli $Z$ component is unaffected by the energy regularizer. We only need to apply regulator in $l$ bases as $\exp(-\Delta \hat{l}^2/2)$ because the phase states are un-normalizable while angular momentum states are normalized. This regularizer can also be regarded as an unnormalized thermal state density operator whose Hamiltonian is quadratic in angular momentum.

\subsection{Clifford-group orbits of the Josephson junction}\label{sec:josephson_junction}

In this subsection, we discuss how the Josephson junction Hamiltonian \cite{girvin2014circuit,blais2021circuit} transforms under the action of the rotor Clifford group (\textit{without} performing the cosine expansion and approximating all rotors as oscillators).
Rotor Clifford operations perform a unique mixture of oscillator-like Gaussian manipulations and qubit-like Clifford conjugations.

The Josephson junction allows for tunneling of superconducting paired electrons called Cooper pairs between the two islands that the junction connects.
The planar rotor angular momentum $\hat{l}$ can be associated with the difference of Cooper pairs $\hat{l}=\hat{n}_R-\hat{n}_L$ across the junction, where $\hat{n}_{L/R}$ is the Cooper-pair number operator of the left and right island.
In such a case, the parity flip transformation \(\textsc{p}\) corresponds to swapping the definition of the two islands.

The Josephson junction Hamiltonian is
\begin{eqs}
    H=-2E_J \cos{\hat{\theta}}-E_C \hat{l}^2=-E_J (e^{i\hat{\theta}}+ e^{-i\hat{\theta}})-E_C \hat{l}^2,
\end{eqs}
which includes a quadratic kinematic term $E_C \hat{l}^2$ and a periodic potential $E_J \cos \hat{\theta}$.
If we consider this Hamiltonian in the picture of embedded rotor where $\hat{\theta} \rightarrow \hat{q}, \hat{l} \rightarrow \hat{p}$, it is equivalent to a boson with quadratic dispersion relation $\hat{p}^2$ that moves in a 1d periodic potential $\cos \hat{q}$.
This can be regarded as a GKP Hamiltonian \cite{gottesman2001encoding,rymarz2021hardware,kolesnikow2023gottesman} with only $\theta$ variable periodic while leaving $l$ variable non-compact.

Suppose we have $n$ decoupled Josephson junctions, where the total Hamiltonian is written as
\begin{eqs}\label{eq:JJ_hamiltonian}
    H_{\text{tot}}=& \sum_{j=1}^n \Big(-E_J (e^{i\hat{\theta}_j}+ e^{-i\hat{\theta}_j})-E_C \hat{l}_j^2 \Big),\\
    =& \sum_{j=1}^n -E_J (X_j(1)+ X_j(1)^\dagger)-E_C \Vec{\mathbf{l}}^T \Vec{\mathbf{l}}.
\end{eqs}
To see how the above Hamiltonian transforms under a rotor Clifford transformation, we recall that in Eq.~\eqref{eq:symplectic_coefficient} how the symplectic matrix acts on the coefficient vector. Instead of directly transforming the momentum-position coefficients in a Pauli string, we use the Heisenberg picture and act on the position-momentum \textit{operators}.
We abuse the notation and suppose $\hat\theta$ could be defined without exponentiation and write the collection of them as a vector $\hat{\boldsymbol{\theta}}=(\hat{\theta}_1,...,\hat{\theta}_n)^T$.
The symplectic matrix acting on the operators can be written as
\begin{eqs}\label{eq:Clifford_operator_change}    (\hat{\boldsymbol{\theta}}^T|\hat{\mathbf{l}}^T)\begin{pmatrix}
        A & 0\\
        CA & (A^T)^{-1}
    \end{pmatrix}=((A^T\hat{\boldsymbol{\theta}})^T+(A^TC\hat{\mathbf{l}})^T|(A^{-1}\hat{\mathbf{l}})^T).
\end{eqs}

The position-operator part of the Hamiltonian becomes mixture of position and momentum Paulis,
\begin{eqs}
    \sum_{j=1}^n (e^{i\hat{\theta}_j} +\text{h.c.}) \rightarrow \sum_{j=1}^{n} (e^{i \sum_{k=1}^n A_{kj} \hat{\theta}_k}e^{i\sum_{k=1}^n(A^TC)_{jk}\hat{l}_k}+\text{h.c.}).
\end{eqs}
The $Z$ part of Pauli operators comes in because of the off diagonal matrix $Q_{XZ}=CA$, corresponding to $\textsc{cphs}$ and $\textsc{quad}$ gates.
If we only applied an element of the CSS subgroup $H$ from Eq.~\eqref{eq:clifford_generator}, then the original $X_j(1)$ only transforms into a product of Pauli \(X\) operators of rotors.

On the other hand, the angular-momentum part of the Hamiltonian always transforms as
\begin{eqs}
    \hat{\mathbf{l}}^T \hat{\mathbf{l}}\rightarrow \hat{\mathbf{l}}^T (A)^{-1} (A^T)^{-1} \hat{\mathbf{l}},
\end{eqs}
remaining a quadratic form after any symplectic transformation.

The difference between the transformation behaviours of the $\theta$ and $l$ variables illustrates the difference between periodic bounded variables and unbounded variables. For an unbounded variable $l$, the quadratic term in the Hamiltonian remains a quadratic term. This is the same as in the oscillator system where a quadratic Hamiltonian remains quadratic after Gaussian transformation.

However, for periodic variable $\theta$, if we start from a trivial Hamiltonian $H=\sum_{j=1}^n X_j$ and apply a Clifford circuit, we get a stabilizer Hamiltonian $\hat{H}'=\sum_{j=1}^n \hat{S}_j$, where $\hat{S}_j$s are Pauli strings of multiple rotors. This behaviour resembles the Clifford transformation of Pauli operators for qudits. This is due to the exponential required to properly express the periodic rotor position.

\section{Homological rotor codes}\label{sec:homo_rotor_revisit}

In Ref.~\cite{vuillot2023homological}, the authors defined a construction of rotor CSS codes and related the underlying structure of the codes to homology. 
For planar rotor systems including superconducting qubits, diatomic molecules, and nano-rotors, this construction provides a systematic way to construct CSS-like quantum codes that encode both discrete-variable (logical qudits) and continuous-variable (logical rotors) subspaces.
Here, we revisit these homological rotor codes.

Homological rotor codes are stabilizer codes \eczoo{stabilizer}, meaning that the codespace is the \(+1\)-eigenvalue eigenspace of a group of mutually commuting rotor Pauli strings.
Furthermore, these codes are of CSS type \eczoo{css}, which means that their stabilizer group is generated by purely $X$ and purely $Z$ Pauli strings.

The stabilizer generators of an \(n\)-rotor code are described by matrices $H_X$ and $H_Z$ of integer-valued entries with size $r_X\times n$ and $r_Z\times n$, respectively. The stabilizer group is
\begin{eqs}\label{eq:rotor_stabilizer_group}
    \mathcal{S}=\{X(\Vec{\boldsymbol{m}}^T H_X)Z(\Vec{\boldsymbol{\phi}}^T H_Z),~\forall \Vec{\boldsymbol{m}}\in\mathbb{Z}^{r_X},~\forall \Vec{\boldsymbol{\phi}}\in\mathbb{T}^{r_Z}\}.
\end{eqs}
Mutual commutation imposes the condition
\begin{eqs}
    &Z(\Vec{\boldsymbol{\phi}}^T H_Z)X(\Vec{\boldsymbol{m}}^T H_X)\\=&\exp{(i\Vec{\boldsymbol{\phi}}^T H_Z H_X^T\Vec{\boldsymbol{m}})}X(\Vec{\boldsymbol{m}}^T H_X)Z(\Vec{\boldsymbol{\phi}}^T H_Z)\\
    =&X(\Vec{\boldsymbol{m}}^T H_X)Z(\Vec{\boldsymbol{\phi}}^T H_Z)\implies H_Z H_X^T=0.\\
\end{eqs}

The matrices $H_X$ and $H_Z$ can also be viewed as maps $\partial=H_X:\mathbb{Z}^{r_X}\to\mathbb{Z}^n$ and $\sigma=H_Z^T:\mathbb{Z}^n\to\mathbb{Z}^{r_Z}$. The above condition implies that composition of the corresponding maps is zero, $\sigma\circ\partial=H_XH_Z^T=0$.
This enables us to define a chain complex over the integers, with $\partial$ and $\sigma$ as its boundary maps.

Code properties can be equivalently stated in terms of properties of the chain complex.
The logical space is given by the complex's first homology group,
\begin{eqs}
\text{H}_1(\mathbb{Z})=\ker(H_Z^T)/\text{im}(H_X)~,
\end{eqs}
which is the quotient group formed by cosets of elements of the image of \(H_X\) in the kernel of \(H_Z^T\).

Reference \cite{vuillot2023homological} showed that the above homology group is a product of integer factors \(\mathbb Z\) --- each denoting a logical rotor --- and discrete factors \(\mathbb{Z}_d\) --- each denoting a \(d\)-dimensional logical qudit.
The latter pieces yield finite-dimensional codespaces and come from what is known as \textit{torsion}.
This effect is not present in analogously defined oscillator and Galois-qudit CSS codes \eczoo{css}.

The impact of torsion is present already in the case of a single rotor, where one obtains a finite-dimensional codespace that is related to both rotor and oscillator GKP codes.

\begin{example}\label{example:torison_GKP}
For a single rotor homological code, there is only one stabilizer, $X(dN)=e^{idN\hat{\theta}}$ with a one-by-one matrix \(H_X = dN\).
The other matrix \(H_Z=0\) since no other integer commutes with \(H_X\).
The codespace is finite dimensional, with the $dN$ codewords given by
\begin{eqs}\label{eq:embedded_oscillator_GKP}\ket{j}=\sum_{m\in\mathbb{Z}}\ket{l=mdN+j},~\forall j\in\mathbb{Z}_{dN}.
\end{eqs}

We can relate the above homological rotor code to the ordinary oscillator GKP codes [see Fig.~\ref{fig:oscillator_GKP_torsion}(a)].
Embedding this rotor in an oscillator by thinking of the rotor momentum states \(|l\rangle\) as oscillator momentum states, we see that the codewords correspond to those of the GKP codes. And the stabilizer for the embedded rotor together with the $H_X$ stabilizer exactly correspond to the pair of the stabilizers of the GKP codes.

Alternatively, if we restrict the exponent of $Z$-type stabilizers from $\phi\in\mathbb{T}$ to some subgroup $\phi\in\mathbb{Z}_N$, we can observe that the \(Z\)-type Pauli \(\hat{S}_Z=Z(\frac{2\pi}{N})\) does commute with \(X(dN)\).
Treating this as an additional stabilizer yields the rotor GKP codes, and adding this stabilizer can be thought of as concatenating the above homological rotor code with modular-qudit GKP codes \eczoo{qudit_gkp} [see Fig.~\ref{fig:oscillator_GKP_torsion}(b)].

We discuss both of the above relations in Sec.~\ref{sec:rotor_gkp}.

\end{example}

\begin{figure}[t]
    \centering
    \includegraphics[width=\columnwidth]{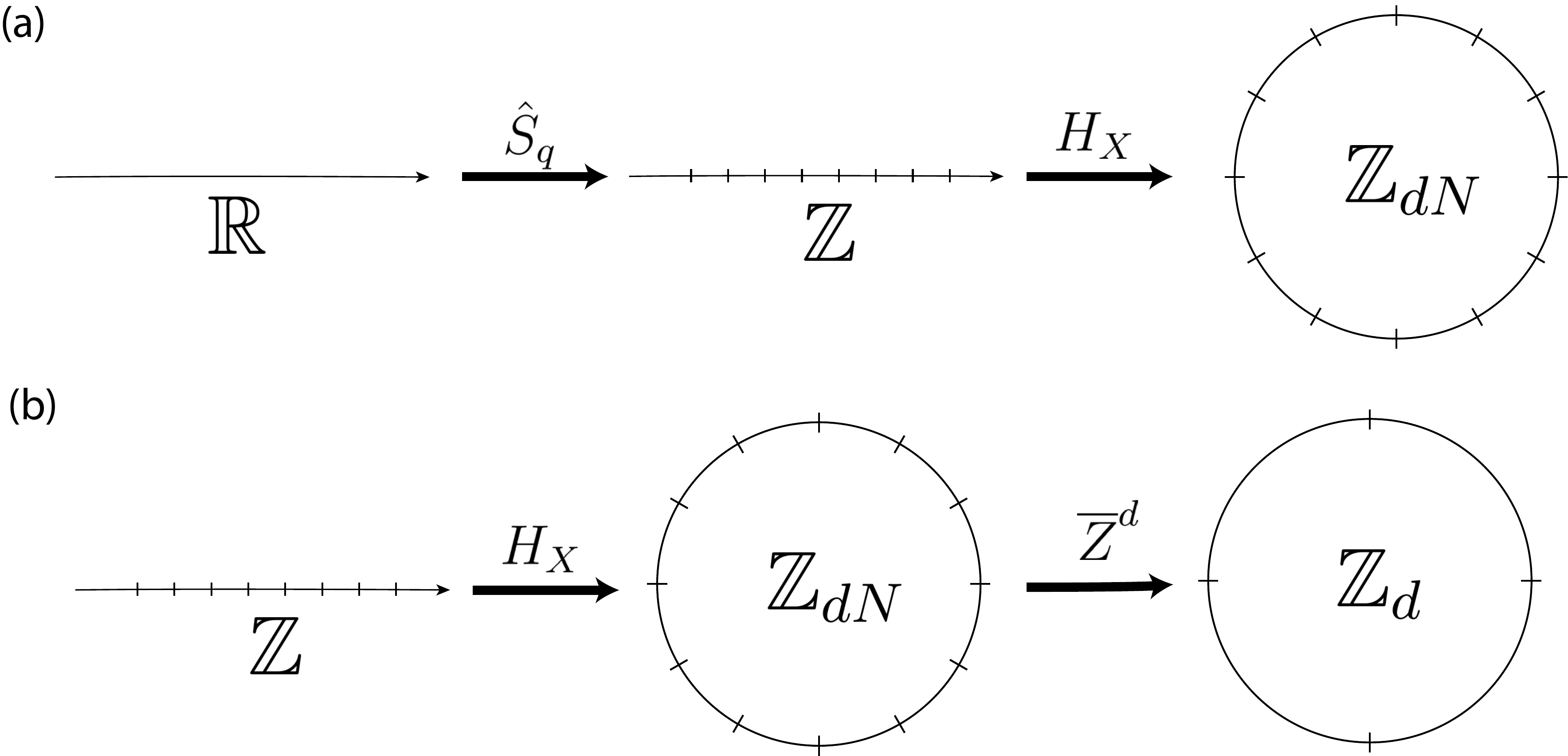}
    \caption{(a) An oscillator GKP code that encodes a $\mathbb{Z}_{dN}$ qudit is a homological embedded rotor code with $H_X=dN$ and torsion part $\mathbb{Z}_{dN}$. (b) A rotor GKP code can be viewed as a concatenation of a homological rotor code with $H_X=dN$ and a modular-qudit GKP code. We discuss both of these relations in Example \ref{example:torison_GKP} and Sec.~\ref{sec:rotor_gkp}.}
    \label{fig:oscillator_GKP_torsion}
\end{figure}

\subsection{Torsion and Smith normal form}\label{sec:torsion}
To build up intuition behind torsion, we calculate the homology group by studying the kernels and images in the Hilbert space of rotors which form a lattice $\mathbb{Z}^n$.

To calculate the homology, we first need to identify the sub-lattice $\ker(H_Z^T)$.
We argue that $H_Z$ only contributes to defining the sub-lattice $\ker(H_Z^T)$ but does not contribute to the torsion part. This is because their stabilizer set is a continuum of stabilizers \(Z(\Vec{\boldsymbol{\phi}}H_Z)\), indexed by the real-valued vectors \(\Vec{\boldsymbol{\phi}}\).
So as long as the true matrices \(H_Z\) are integer-valued, their rescaled versions yield the same stabilizer group. Therefore, we can effectively rescale each row of $H_Z$ to obtain a vector with the smallest possible lattice spacing.

We then need to calculate $\text{im}(H_X)$. This is different from the $H_Z$ case because the stabilizers are $X(\Vec{m}H_X),~\forall \Vec{m}\in \mathbb{Z}^{r_X}$. As the coefficients are discrete, we are not allowed to arbitrarily rescale row vectors in $H_X$.
These vectors generate a lattice whose spacing is determined by their length. If a row vector $\Vec{v}_x$ is $k$ times the unit vector in this direction, then the image of $\Vec{v}_x$ under coefficient $\mathbb{Z}$ will skip the grid points in between the starting and ending points of $\Vec{v}_x$. When we take the quotient of $\text{im}(H_X)$, the direction of $\Vec{v}_x$ becomes a circle with $k$ grid points, giving a $k$ dimensional qudit and contributing a $\mathbb{Z}_k$ factor to the torsion part.

The systematic method of identifying the elements in $\text{im}(H_X)$ that do not have unit length in one direction but are not multiples of other vectors is to calculate the Smith normal form. For an $r_X\times n$ integer-valued matrix $H_X$, its \textit{Smith normal form} \cite{dumas2003computing,web:smith_normal} $D$ is given by
\[U H_X V= D,\] where  $D$ is in the diagonal form, meaning $D_{ij}=d_i \delta_{ij},~1\leq i\leq r_X$. Here $d_i$ are positive integers and satisfy the condition that each $d_i$ is a divisor of $d_{i+1}$, for $1\leq i<r_X$. $U$ is an $r_X\times r_X$ unimodular matrix and $V$ is an $n\times n$ unimodular matrix. The diagonal entries of $D$, $\{d_1,...,d_{r_X}\}$, are unique and they are called the invariant factors.

The Smith normal form is closely related to the homology group defined by $H_Z$ and $H_X$. To be concrete, let the diagonal element of $D$ be $d_1, ..., d_{r_X}$, then the homology group defined by $H_Z^T$ and $H_X$ with $H_X H_Z^T=0$ can be obtained as
\begin{equation}
\ker(H_Z^T)/\text{im}(H_X)=\Big(\bigoplus_{i=1}^{r_X} \mathbb{Z}_{d_i} \Big)\oplus \mathbb{Z}^{n-r_X-r_Z}.
\end{equation}
The $\bigoplus_{i=1}^{r_X} \mathbb{Z}_{d_i}$ is the torsion part and the $\mathbb{Z}^{n-r_X-r_Z}$ is the free part. If $d_i=1$, then $\mathbb{Z}_{d_i}=1$, which is trivial.

The meaning of the matrix $U$ in $UH_XV=D$ is performing linear combinations of the generators of the $X$ stabilizer group. It has no physical consequences to the logical space of the code. The physical meaning of $V$ is actually the \textit{decoding circuit} of the code. To see this, recall that a Clifford transformation that preserves the CSS structure of the code should not mix the $l$ and $\theta$ quadratures [see discussions around Eq.~\eqref{eq:Clifford_operator_change}], it should be only an element of the CSS Clifford subgroup $H$  from Eq.~\eqref{eq:clifford_generator}. The $H_X$ and $H_Z$ matrices collect the phase and angular momentum operators so they should change as Eq.~\eqref{eq:Clifford_operator_change} under the CSS Clifford subgroup $H$,
\begin{eqs}
    H_X^\prime=H_X A\quad\quad\text{and}\quad\quad H_Z^\prime=H_Z (A^{-1})^T.
\end{eqs}
Indeed, the CSS condition is preserved as $H_X^\prime H_Z^{\prime T}=0$. So $V$ represents a special decoding Clifford transformation that decouples the entangled stabilizers in $H_X$ to individual $X$ stabilizers on each rotor.

\subsection{Code initialization \& equivalence classes}\label{sec:equivalence_class}

Homological rotor encodings can be performed analogously to those of Gaussian or analog stabilizer codes \cite{lloyd1998analog,gu2009quantum} \eczoo{analog_stabilizer}.
The oscillator encodings consist of a Gaussian transformation applied to an initial \(n\)-mode state, with \(k\) of the modes storing logical information, and \(n-k\) modes initialized in the zero-position state --- the canonical nullifier state. Rotor encodings can be defined analogously using rotor initial states and the rotor Clifford group circuit \(V\), defined in the previous subsection.
A key difference is that obtaining codewords with particular torsion cannot be done by Clifford-group operations and instead requires the initial state on the \(n-k\) rotors to be of a particular form.

Applying elements from the CSS Clifford-subgroup $H$ from Eq.~\eqref{eq:clifford_generator} to a given homological rotor code does not change the torsion structure of the logical subspace. This can be seen from the invariance of the Smith normal form under unimodular transformation.
Consider an $X$-parity check matrix after transformation $H_X'= B H_X A$ where $A$ is an unimodular matrix representing the Clifford transformation, and $B$ is a change of basis of the stabilizer generators. Its Smith normal form $D'$ is given by
\begin{eqs}
    D'=U' H_X' V'=(U B^{-1}) B H_X A (A^{-1} V)= D.
\end{eqs}
So the Smith normal form $D$ is invariant under the action of unimodular matrices $A$ and $B$.

This fact implies that quantum rotor codes with distinct Smith normal forms belong to different classes of rotor codes, and cannot be deformed from one class to another by only applying rotor Clifford gates and changing the basis of the stabilizer group, which is called Clifford deformation.
The equivalence relation between phase space lattices is also studied in the context of harmonic oscillator GKP states \cite{royer2020stabilization,conrad2022gottesman,wu2023optimal,lin2023closest}. Quantum systems with the same Hilbert space and the same number of fundamental degrees of freedom are not necessarily in the same equivalence class. For example, for a $16$ dimensional Hilbert space with two quantum degrees of freedom,  $\mathbb{Z}_2 \bigoplus \mathbb{Z}_8$ and $\mathbb{Z}_4 \bigoplus \mathbb{Z}_4$ are not equivalent via unimodular transformations as they have different Smith normal forms.

Since Clifford deformation never changes the invariant torsion factors $\{d_1,...,d_{r_X}\}$,  the initial state of a homological rotor encoding with a given Smith normal form needs to be a resource state with that form. In the embedding formalism of rotors, these resource states become exactly GKP states, like the code shown in Example \ref{example:torison_GKP}.

Another observation is that the specific form of the Smith normal form (namely, that each $d_i$ is a divisor of $d_{i+1}$) indicates that not any tensor product of qudits with arbitrary dimensions is allowed in the torsion part. Different tensor products may yield the same Smith normal form. Generally, given two qudits with dimensions $a$ and $b$ which are not divisors of one the other, let $\gcd(a,b)=c$. We can write $a=cq$, $b=cp$, so $p,~q$ are coprime. There exist integers $r$ and $s$ such that $pr-qs=1$. For the integer matrix
$C=\begin{pmatrix}
            a & 0 \\
            0 & b
        \end{pmatrix}$,
we can calculate its Smith normal form as
    \begin{eqs}
        JCK= D=\begin{pmatrix}
            c & 0\\
            0 & cqp
        \end{pmatrix},
    \end{eqs}
where
\begin{eqs}\label{eq:unimodularpq}
    J&=\begin{pmatrix}
            1 & 0\\
            -1 & 1
        \end{pmatrix}\begin{pmatrix}
            1 & 1-q\\
            0 & 1
        \end{pmatrix}\begin{pmatrix}
            1 & 0\\
            -s & 1        \end{pmatrix}=\begin{pmatrix}
            pr-s & 1-q\\
            -pr & q
        \end{pmatrix},\\
      K&=\begin{pmatrix}
            1 & 0\\
            r & 1
        \end{pmatrix}\begin{pmatrix}
            1 & pq\\
            0 & 1
        \end{pmatrix}
        =\begin{pmatrix}
            1 & pq\\
            r & pqr+1
        \end{pmatrix}.
\end{eqs}
This means two qudits with dimensions $a=cq$ and $b=cp$ are equivalent to two qudits with dimensions $c$ and $cpq$ respectively, in which  $\gcd(a,b)=c$ and $p$, $q$ are co prime.
We provide a concrete example of merging two logical qudits into a combined qudit, showing that there are different ways to interpret the same encoding.

\begin{example}
    Given an $X$-type parity check matrix $H_X=\begin{pmatrix}
        2 & 0\\
        0 & 3
    \end{pmatrix}$ that corresponds to a composite logical system formed by a $\mathbb{Z}_2$ qubit and a $\mathbb{Z}_3$ qudit, it is equivalent to a $ \mathbb{Z}_6$ qudit via the  sequence of unimodular transformation in Eq.~\eqref{eq:unimodularpq}.
\begin{eqs}\label{eq:example_Smith_normal_form}
        H_X=&\begin{pmatrix}
            2 & 0 \\
            0 & 3
        \end{pmatrix} \rightarrow
        \begin{pmatrix}
            2 & 0 \\
            -2 & 3
        \end{pmatrix} \rightarrow \begin{pmatrix}
            2 & 0\\
            1& 3
        \end{pmatrix} \rightarrow
        \begin{pmatrix}
            1 & -3 \\
            1& 3
        \end{pmatrix}\\
        \rightarrow & \begin{pmatrix}
            1 & 0 \\
            1 & 6
        \end{pmatrix} \rightarrow
        \begin{pmatrix}
            1 & 0 \\
            0 & 6
        \end{pmatrix}=D.
    \end{eqs}
A pictorial description of these transformations is shown in Fig.~\ref{fig:example_Smith_normal_form}.

\end{example}

\begin{figure}[ht]
    \centering
    \includegraphics[width=0.45\textwidth]{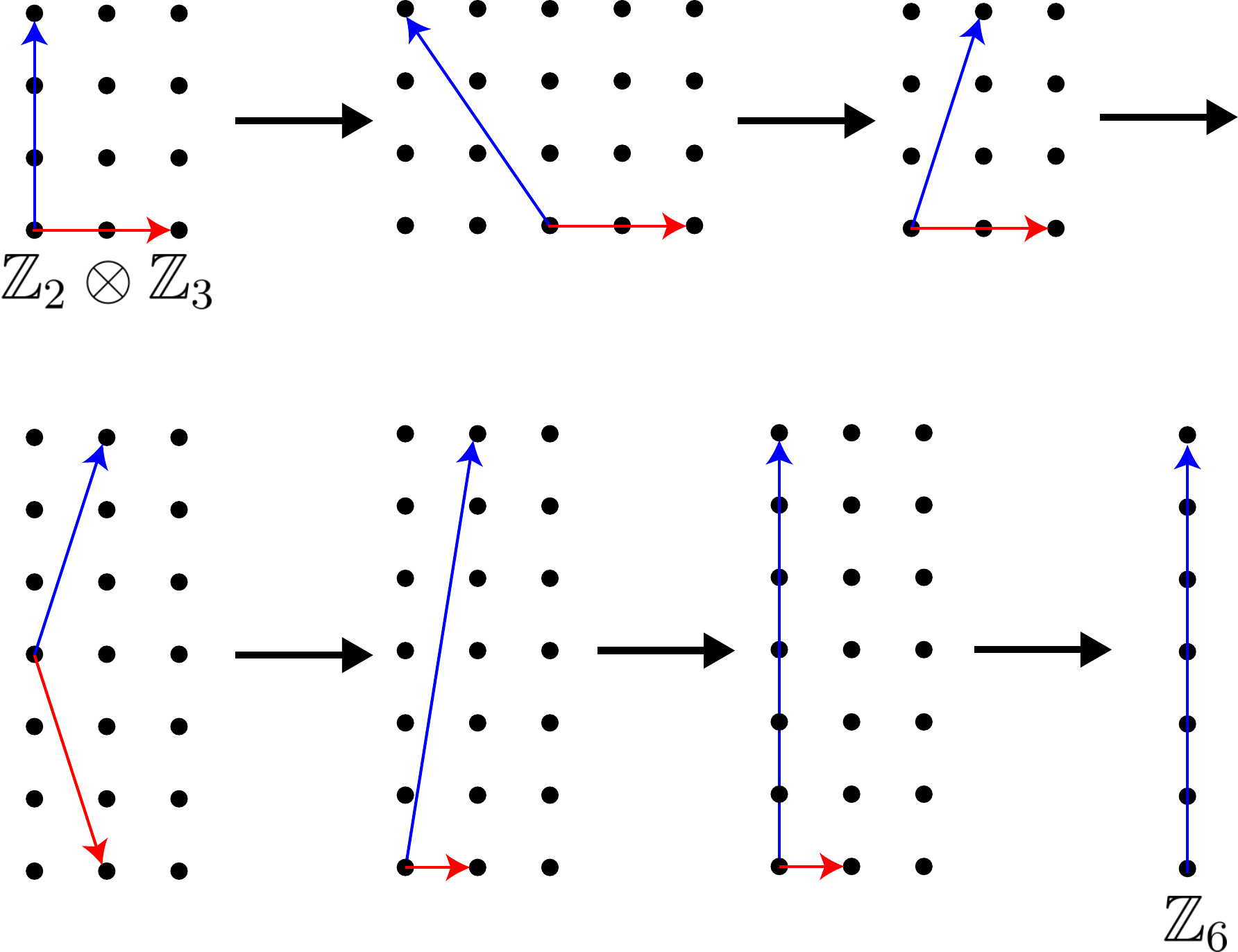}
    \caption{A sequence of unimodular transformation operations by which the isomorphism $\mathbb{Z}_2 \otimes \mathbb{Z}_3 \cong \mathbb{Z}_6$ is realized. }
    \label{fig:example_Smith_normal_form}
\end{figure}

\begin{figure*}[t]
    \centering
    \includegraphics[width=0.95\textwidth]{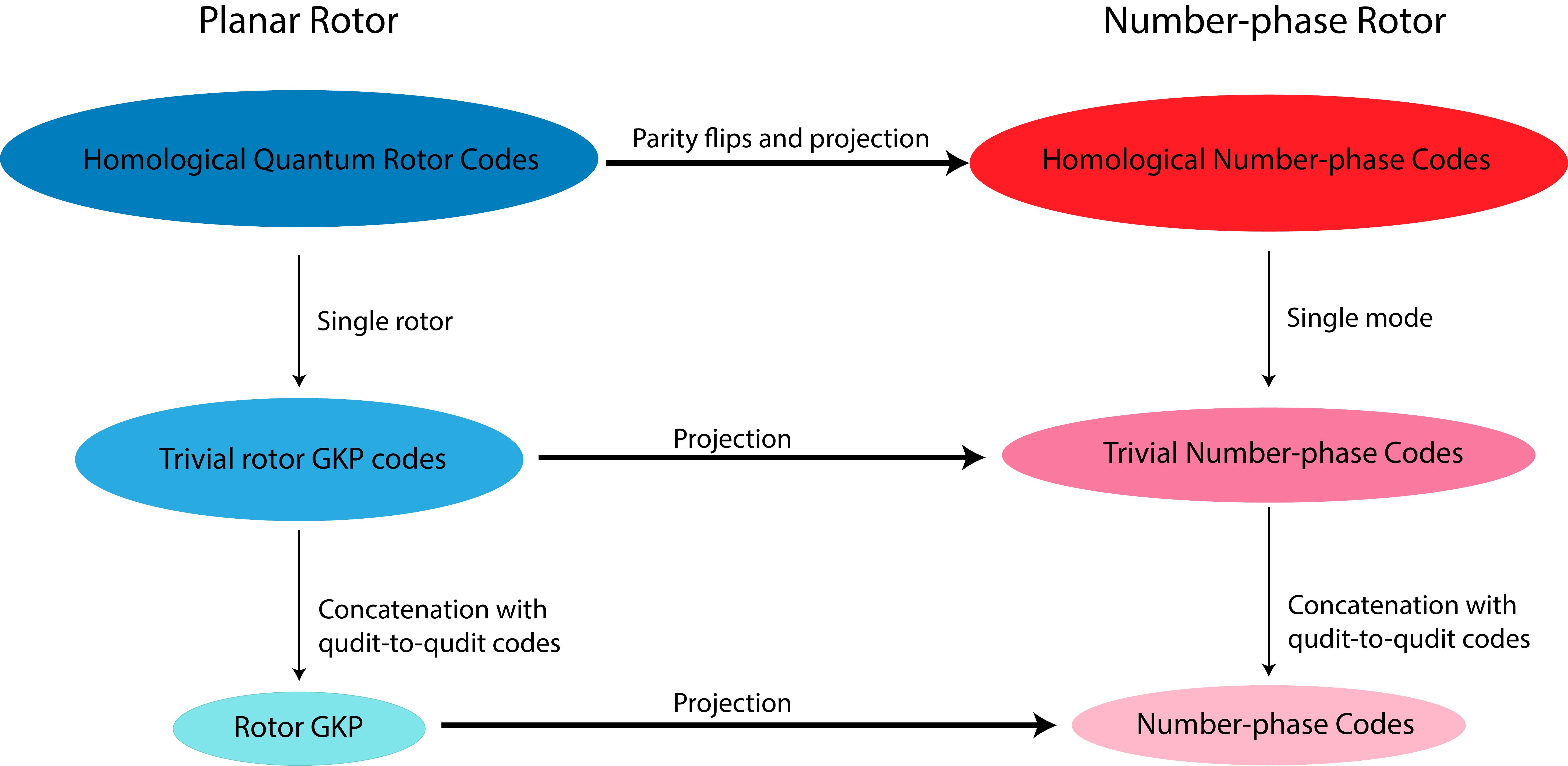}
    \caption{
    Relations between homological rotor codes, trivial rotor GKP codes (oscillator GKP codes from the point of view of the embedded rotor), rotor GKP codes, and homological number-phase codes.
    }
    \label{fig:code_relation}
\end{figure*}

\subsection{Rotor GKP codes are concatenations of homological rotor codes and modular-qudit GKP codes}\label{sec:rotor_gkp}

Analogously to the GKP codes in harmonic oscillators in Eq.~\eqref{eq:embedded_oscillator_GKP}, GKP codes in rotors can be defined by definite a discrete lattice in the rotor phase space, $\mathbb{T}\times \mathbb{Z}$. The rotor GKP code has two stabilizers $\hat{S}_X=X(dN)$, $\hat{S}_Z=Z(\frac{2\pi}{N})$. The codewords are
\begin{eqs}\label{eq:rotor_GKP_codeword}
    \ket{Z_L=\omega^j}_L&=\sum_{m\in \mathbb{Z}} \ket{l=jN+mdN},
\end{eqs}
where $\omega=e^{i\frac{2\pi}{d}}$.
Given the codewords, we have logical Pauli operations
\begin{eqs}\label{eq:rotor_qudit_logical}
    X_L=X(N)=\hat{S}_X^{1/d}, \quad Z_L=Z(\frac{2\pi}{dN})=\hat{S}_Z^{1/d}.
\end{eqs}

Given the similarity between oscillator and rotor GKP codes, we observe a connection between rotor GKP codes and homological rotor codes.

In the framework of homological rotor codes \cite{vuillot2023homological}, the $Z$-type stabilizers are  $Z(\Vec{\boldsymbol{\phi}}^T H_Z)$, $\forall \vec{\boldsymbol{\phi}} \in \mathbb{T}^{r_Z}$.  The phase variables can take continuous values in $\mathbb{T}$. This structure is different from the settings of rotor GKP codes, where the $Z$-type stabilizers are generated by a finite subgroup of $\mathbb{T}$. Despite this difference, we can view the two stabilizers of the rotor GKP code as a two-step concatenation. The two steps of this concatenation are shown in the part (b) of Fig.~\ref{fig:oscillator_GKP_torsion}.

Starting from a single rotor, we first impose the stabilizer $H_X=dN$. This is the code discussed in the previous subsection with codewords and logical operators
\begin{eqs}
&\ket{\overline{Z}=e^{i \frac{2\pi j}{dN}}}= \sum_{m\in \mathbb{Z}} \ket{l=j+mdN},\\
&\overline{X}= X(1)= e^{i\hat{\theta}},\quad \overline{Z}=Z(\frac{2\pi}{dN})=e^{i \frac{2\pi }{dN} \hat{l}}.
\end{eqs}

Then we further impose $\overline{Z}^d=Z(\frac{2\pi}{N})$ as the stabilizer for the second step, thereby concatenating the homological code with a modular-qudit GKP code \cite[Sec.~II]{gottesman2001encoding} that encodes a logical $d$ dimensional qudit into a $dN$ dimensional physical qudit. The effect of the stabilizer $\overline{Z}^d$ is to fix the angular momentum to be a multiple of $N$. Hence, we obtain a $d$-dimensional logical qudit Hilbert space, whose codewords are in Eq.~\eqref{eq:rotor_GKP_codeword} and logical operators are in Eq.~\eqref{eq:rotor_qudit_logical}.

In Appendix.~\ref{app:coherent_Wigner}, we calculate the Wigner function for GKP states in rotor space and find the function does have negativity which corresponds to the ``magic'' for continuous-variable systems. Furthermore, in Appendix~\ref{app:ec_condition_rotor_GKP}, we calculate the error correction condition for regularized rotor GKP states and find it is related to Jacobi $\vartheta$ functions.

\section{Homological number-phase codes}\label{sec:homo_np_rotor_code}

In this section, we turn back to the number-phase rotor introduced in Sec.~\ref{sec:preliminary} and construct oscillator codes correcting photon number changes and dephasing errors. 

\textcolor{black}{Quantum error-correcting codes only protect a certain class of errors. For example, the oscillator GKP codes are designed to correct (linear) quadrature displacements, but they underperform against (radial) dephasing errors.
To protect information from these types of errors, we need to study bosonic codes with rotational symmetry in phase space which can naturally correct both photon number changes and dephasing noise. Therefore, the point of view in Sec.~\ref{sec:preliminary} that viewing an oscillator as a number-phase rotor is useful for this purpose.}

We draw inspiration from the fact that the number-phase oscillator code can be obtained from the rotor GKP code by projecting the former into the non-negative rotor Hilbert subspace. Extending this intuition, we show that \textit{any} homological rotor code can be similarly mapped into number-phase rotors, after flipping the signs of some of the rotor momenta.
This yields multi-mode \textit{homological number-phase codes} for the oscillator, compatible with noise channels where photon loss is present but random-rotation (i.e., dephasing) noise is dominant.

The example below shows what is known for the case of a single mode.

\begin{example}\label{example:np_as_projected_rotor}

The number-phase code is a rotation-symmetric bosonic code protecting against photon loss and dephasing errors \cite{arne2020rotation} (see also \cite{ouyang2021trade}). Its codewords are
\begin{eqs}
    \ket{\overline{0}}_{\text{np}}=\sum_{k\in \mathbb{N}} \ket{2kN},~\ket{\overline{1}}_{\text{np}}=\sum_{k\in \mathbb{N}}  \ket{(2k+1) N}.
\end{eqs}
This code can be obtained by projecting the rotor GKP code from Eq.~\eqref{eq:rotor_GKP_codeword} onto the non-negative rotor Hilbert subspace, and identifying said subspace with the Fock space of the oscillator.

The original rotor GKP code stabilizers, $Z(2k \pi/N)$ and $X(2N)$, become $Z(2k\pi/N)_{\text{np}}$ and $X(2N)^\dagger_{\text{np}}$ after projection, respectively. The new \(Z\)-type operator is still a stabilizer of the number-phase code, but $X(2N)_{\text{np}}$ is no longer a stabilizer since it is not unitary. Nevertheless, the two \(X\) and \(Z\)-type operators form a semigroup, and the codestates are \(+1\)-eigenvalue \textit{right} eigenstates of all semigroup elements \cite{arne2020rotation, albert2022bosonic}.

\end{example}

Encouraged by the above single-rotor example, we can try to take an arbitrary homological rotor code and project its codewords onto the non-negative momentum subspace of each rotor. However, directly this can sometimes result in trivial codewords, as in Example \ref{example:422rotor}.

Nevertheless, there is a simple remedy.
The idea is to flip the orientation of certain rotors so that each codeword has non-trivial support on the non-negative momentum subspace of each rotor.
In the example below, we convert the four-rotor current-mirror code \eczoo{current_mirror} \cite{vuillot2023homological} into its corresponding number-phase rotor code.

\begin{example}\label{example:422rotor}
The four-rotor current-mirror code is defined by parity check matrices

\begin{equation}\label{eq:unfliped_four_rotor}
    H_X=\begin{pmatrix}
        1&-1&0&0\\
        0&0&-1&1\\
        -1&-1&1&1\\
    \end{pmatrix}, \quad
    H_Z=\begin{pmatrix}
        1&1&1&1
    \end{pmatrix}.
\end{equation}
This code encodes a qubit, with logical codewords
\begin{eqs}
&\ket{\bar 0}=\sum_{l_1,l_2,l_3\in\mathbb{Z}}\ket{l_1,2l_2-l_1,l_3.-2l_2-l_3},\\
&\ket{\bar 1}=\sum_{l_1,l_2,l_3\in\mathbb{Z}}\ket{l_1,2l_2+1-l_1,l_3.-2l_2-1-l_3},
\end{eqs}
and logical operators $\overline{X}=X_2(1)^\emph{\dagger} X_4(1)$, $\overline{Z}=Z_3(\pi) Z_4(\pi)$.
Applying the number-phase projection prematurely by cutting off all negative momenta, we see that the only survived state is the trivial state $\ket{0,0,0,0}$. This means that some sign flips are required in order for the projected code to store any logical information.

The required transformation can be done by flipping momenta of the second and third rotors.
The $H_X$ matrix is then
\begin{equation}\label{eq:flipped_four_rotor_stabilizer}
    H_X^+=\begin{pmatrix}
        1&1&0&0\\
        0&0&1&1\\
        0&2&0&2\\
    \end{pmatrix},\quad
    H_Z^+=\begin{pmatrix}
        1&-1&-1&1\\
    \end{pmatrix}~,
\end{equation}
and the code words become
\begin{eqs}\label{eq:flipped_four_rotor_code}
&\ket{\bar 0}_+=\sum_{l_1,l_2,l_3\in\mathbb{Z}}\ket{l_1,2l_2+l_1,l_3,2l_2+l_3},\\
&\ket{\bar 1}_+=\sum_{l_1,l_2,l_3\in\mathbb{Z}}\ket{l_1,2l_2+1+l_1,l_3,2l_2+1+l_3}~.
\end{eqs}
Applying the projector $\Pi_{\geq 0}^{\otimes 4} $ to the codewords yields
\begin{eqs}\label{eq:four_np}
    &\ket{\bar 0}_{\text{np}}=\Pi_{\geq 0}^{\otimes 4
    } \ket{\Bar{0}}_+=\sum_{l_1,l_2,l_3\in\mathbb{N}}\ket{l_1,2l_2+l_1,l_3,2l_2+l_3},\\
&\ket{\bar 1}_{\text{np}}=\Pi_{\geq 0}^{\otimes 4
    }  \ket{\Bar{1}}_+=\sum_{l_1,l_2,l_3\in\mathbb{N}}\ket{l_1,2l_2+1+l_1,l_3,2l_2+1+l_3},
\end{eqs}
with logical operators
\begin{eqs}\label{eq:four_np_logical}
    \overline{X}=X_2(1)^\emph{\dagger} X_4(1)^\emph{\dagger}, \overline{Z}=Z_3(-\pi) Z_4(\pi).
    \end{eqs}
The stabilizer semigroup of Eq.~\eqref{eq:four_np} is generated by
\begin{eqs}\label{eq:four_np_stabilizer}
    \{ &X_1(1)^\emph{\dagger}_{\text{np}} X_2(1)^\emph{\dagger}_{\text{np}},  X_3(1)^\emph{\dagger}_{\text{np}} X_4(1)^\emph{\dagger}_{\text{np}}, X_2(2)^\emph{\dagger}_{\text{np}} X_4(2)^\emph{\dagger}_{\text{np}}, \\
    &Z_1(\phi)_{\text{np}} Z_2(\phi)_{\text{np}}^{\emph{\dagger}} Z_3(\phi)_{\text{np}}^{\emph{\dagger}} Z_4(\phi)_{\text{np}}\}.
\end{eqs}
\end{example}

The following proposition proves that a sequence of flipping operations can always be found to satisfy this constraint.

\begin{proposition}\label{pros:existance_homo_np}
    For a given homological \(n\)-rotor code with matrices $H_X$ and $H_Z$, there exists an unimodular and diagonal matrix $S$, such that all logical codewords of the code defined by $H_X^\prime=H_X S$ and $H_Z^\prime=H_Z (S^T)^{-1}$ have support on the orthant where each rotor has non-negative momentum.
\end{proposition}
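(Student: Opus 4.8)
The plan is to translate the ``non-negative support'' requirement into a purely lattice-theoretic statement and then reduce it to the existence of a positive vector in a full-rank sublattice of $\mathbb{Z}^n$. First I would recall that a codeword of the homological rotor code lives in the sublattice $L = \ker(H_Z^T) \subseteq \mathbb{Z}^n$, and its distinct cosets modulo $\operatorname{im}(H_X)$ label the logical states. Flipping the orientation of a subset $F \subseteq \{1,\dots,n\}$ of rotors corresponds to conjugating by the diagonal unimodular matrix $S = \operatorname{diag}(\epsilon_1,\dots,\epsilon_n)$ with $\epsilon_i = -1$ for $i \in F$ and $\epsilon_i = +1$ otherwise; this sends $L \mapsto SL$ and $\operatorname{im}(H_X) \mapsto S\operatorname{im}(H_X)$, preserving all the homology/torsion data (consistent with Sec.~\ref{sec:equivalence_class}). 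So the claim becomes: there is a choice of signs such that every coset of $S\operatorname{im}(H_X)$ inside $SL$ contains a vector with all entries $\geq 0$.

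The key reduction is that it suffices to find a \emph{single} all-positive vector $\vec{w}$ lying in the sublattice $SL$ with, moreover, $\vec{w}$ chosen inside $S\operatorname{im}(H_X)$ (or more carefully, such that translating any given coset representative by a suitable positive element of $S\operatorname{im}(H_X)$ lands it in the non-negative orthant). Concretely, I would argue as follows. Since $\operatorname{im}(H_X)$ and $\ker(H_Z^T)$ together with the free part span $\mathbb{Z}^n$ up to finite index, and the free part is where the logical rotors live, the essential point is: for the sublattice $\Lambda = \operatorname{im}(H_X)$ (of rank $r_X$) I want a basis $\vec{v}_1,\dots,\vec{v}_{r_X}$ and a diagonal sign matrix $S$ such that after flipping, each finite cyclic direction of the torsion quotient ``wraps around'' entirely within the non-negative orthant — i.e.\ the lattice direction generated by $S\vec{v}_j$ is long enough that its $\mathbb{Z}$-span, when quotiented, gives a circle of grid points none of which needs a negative coordinate. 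The mechanism for choosing $S$ is the observation that any integer vector $\vec{v}$ with a nonzero entry can, by flipping signs of the coordinates where it is negative, be made to have all its \emph{nonzero} entries positive; doing this simultaneously and consistently across a chosen spanning set of generators is where the sign-choice must be made coherent.

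The main obstacle — and the step I expect to require real care — is the \textbf{coherence of the sign choice}: a single diagonal $S$ must simultaneously make \emph{all} relevant generators (a basis of $\operatorname{im}(H_X)$ together with representatives of the free logical directions) have non-negative support, and a priori one generator may want $\epsilon_i = +1$ while another wants $\epsilon_i = -1$. I would handle this by first passing to a convenient basis: apply column operations (a unimodular $V$, as in the Smith-normal-form discussion of Sec.~\ref{sec:torsion}) to put $H_X$ in a form where its rows, together with a complementary set of standard-basis vectors spanning the free part, have a ``staircase'' or nearly-diagonal structure; in such a basis the supports of the generators are largely disjoint, and the sign conflicts can be resolved coordinate by coordinate. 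The fallback, if disjointness fails, is to replace each generator $\vec{v}_j$ by $\vec{v}_j + N_j \vec{u}$ for a fixed strictly positive vector $\vec{u} \in \operatorname{im}(H_X) + (\text{free part})$ and large $N_j$ — a ``shearing toward the positive orthant'' — which does not change the code (it is a redefinition of stabilizer generators, i.e.\ left multiplication by a unimodular $U$) but makes every generator strictly positive, after which any sign flips only help. One then checks, as in Example~\ref{example:422rotor}, that the projected codewords $\Pi_{\geq 0}^{\otimes n}\ket{\overline{\psi}}_+$ remain linearly independent, which follows because the positive-orthant representatives of distinct cosets are themselves distinct. I would close by noting the procedure is constructive: compute $\ker(H_Z^T)$, a basis for $\operatorname{im}(H_X)$, shear toward the positive orthant, then read off $S$ from the residual signs.
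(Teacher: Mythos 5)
Your proposal takes a genuinely different, and more detailed, route than the paper's short argument. The paper's proof has three beats: (i) $\ker(H_Z^T)$ is a sublattice of $\mathbb{Z}^n$ through the origin, hence passes through some orthant; (ii) since each codeword's support is translation-invariant under $\operatorname{im}(H_X)$, that orthant overlaps the support of every codeword; (iii) a diagonal sign matrix $S$ carries any chosen orthant to the non-negative one. Step (ii) --- passing orthant membership from the ambient lattice $\ker(H_Z^T)$ to each individual coset of $\operatorname{im}(H_X)$ --- is asserted essentially without argument, and it is the only non-trivial step.

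Your proposal supplies the mechanism the paper leaves implicit, and does so correctly in outline. You rightly observe that the translating vector must lie in $S\operatorname{im}(H_X)$ (not merely in $S\ker(H_Z^T)$) so that translation preserves the coset, and that a single strictly positive vector $w\in S\operatorname{im}(H_X)$ then lands any coset representative in the non-negative orthant via $c\mapsto c+kw$ for large $k$. That is precisely what is missing in step (ii), and making it explicit is a real improvement. Where the proposal goes astray is in identifying the obstacle: you worry about a ``coherent'' sign choice that makes \emph{every} generator of $\operatorname{im}(H_X)$ non-negative under one $S$, and reach for Smith normal form plus a shearing maneuver $\vec v_j\mapsto\vec v_j+N_j\vec u$. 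Neither is required for the existence claim. One needs only a \emph{single} element of the lattice $\operatorname{im}(H_X)$ with full support; a generic integer combination of the rows of $H_X$ has no accidental zero coordinates (only the forced ones, where every row of $H_X$ vanishes --- and on those coordinates translation is inert regardless of $S$). Once such a $w$ is chosen, $S$ is read off by flipping the signs of its negative entries, and the claim follows. The shearing trick does something stronger --- it arranges for $H_X^+$ itself to be entrywise non-negative, which is in fact how the paper presents the current-mirror example --- but that is a presentational convenience, not a logical necessity. In short: you correctly locate, and largely repair, the gap in the paper's one-paragraph proof, but you over-engineer the sign-coherence step and misattribute the difficulty to simultaneous positivity of all generators rather than to the existence of one generic positive lattice vector.
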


\begin{proof}
    The code subspace is a sub-lattice that passes through the origin in $\mathbb{Z}^n$ lattice grids. So it must have support on some certain orthants. Because code words are defined by the quotient under $\text{Im}(H_X)$, each code word is translation invariant in the direction of the row vectors of $H_X$. This means if the $\ker{H_Z^T}$ sub-lattice passes through an orthant, this orthant should overlap with the support of each code word.

    There always exists a series of flipping operations to transform a given orthant to the orthant in which all the integers are positive, denoted as $(+,+,+,...)$. Specifically, an orthant labelled by $(+,-,+,-,-,...)$ is mapped to $(+,+,+,...)$ via the $\mathbb{Z}_2^{n}$ element $I\otimes \textsc{p}\otimes I \otimes \textsc{p} \otimes \textsc{p}...$, with the isomorphism $+\to I$, $-\to \textsc{p}$.
\end{proof}

The procedure to map a homological rotor code to its corresponding homological number-phase code is as follows.
\begin{enumerate}
    \item Given a $n$-rotor homological rotor code with $H_X$ that is not necessarily non-negative, we flip the parities of part of rotors to make $H_X^+=H_X S$ be a non-negative matrix. This transformation will affect $H_Z$ as well, such that $H_Z^+=H_Z S$. Here, $S$ is a diagonal matrix composed by parity flips $\textsc{p}$ whose diagonal elements are $\pm 1$. The effect of $S$ is to convert all $\textsc{cnot}^\dagger$ gates included in the encoding circuit to $\textsc{cnot}$ gates\footnote{It can be shown that
    $\textsc{p}_1 \textsc{cnot}^\dagger_{1\rightarrow 2} \textsc{p}_1=\textsc{p}_2 \textsc{cnot}^\dagger_{1\rightarrow 2} \textsc{p}_2= \textsc{cnot}_{1\rightarrow 2}$.} such that $H_X^+$ is a non-negative matrix. Hence, in practice, we just need to replace every $\textsc{cnot}^\dagger$ by $\textsc{cnot}$ and eliminate pre-existing parity flips $\textsc{p}$ in the encoding circuit.
    \item Project onto non-negative angular momentum subspace by applying $\Pi_{\geq 0}^{\otimes n}$ on each rotor to obtain a homological number-phase code that is stabilized by a stabilizer semigroup given by $H_X^+$ and $H_Z^+$. The resulting homological number-phase code is stabilized by stabilizer semigroup
    \begin{eqs}
    \mathcal{S}^+=\{X(\Vec{\boldsymbol{m}}^T H_X^+)^\dagger_{\text{np}} Z(\Vec{\boldsymbol{\phi}}^T H_Z^+)_{\text{np}},~\forall \Vec{\boldsymbol{m}}\in\mathbb{N}^{r_X},~\forall \Vec{\boldsymbol{\phi}}\in\mathbb{T}^{r_Z}\}.
    \end{eqs}
\end{enumerate}

Similar to the single-mode case, we can assign a distance against rotation errors to the resulting homological number-phase code.
Phase states are not quite orthogonal (see Ref.~\cite{albert2022bosonic} for the case of a single mode),
so homological number-phase codes are not exactly error correcting.
Nevertheless, the sign flipping is done by the single-rotor Clifford operation $\textsc{p}$ from Eq.~\eqref{eq:clifford_generator}, so the entanglement structure of the code, and thus its intended degree of protection, both carry over.
Since configuration-space distances $|(\theta-\phi) \Mod 2\pi|$ are invariant under sign flipping, homological number-phase codes can be described by the distances of their planar-rotor counterparts, but protect against rotation errors in an approximate sense.

\subsection{Number-phase Clifford semigroup encodings}\label{sec:clifford_semigroup_encoding}

Mapping the rotor Clifford group to the number-phase rotor can be done by cutting off all negative momentum states.
The resulting generators for both groups, expressed in terms of the oscillator photon number operator \(\hat n\), are shown in Table~\ref{table:fake_rotor_Clifford}.
We utilize this rotor Clifford semigroup to encode and decode information into homological number-phase codes.

\begin{table}[t]
\centering

\begin{tabular}{l>{\centering}p{01.in}c}
\toprule
Planar rotor & \multicolumn{2}{c}{Number-phase rotor}\tabularnewline
\midrule
Pauli-$X$ \eqref{eq:rotor_pauli_operator} & Photon injection & ${\displaystyle \sum_{n=0}^{\infty}\ket{m+n}\bra{m}}$\tabularnewline
\midrule
Pauli-$Z$  \eqref{eq:rotor_pauli_operator} & Rotation & ${\displaystyle e^{i\varphi\hat{n}}}$\tabularnewline
\midrule
\textsc{cnot} \eqref{eq:rotor_cnot} & Controlled photon injection & ${\displaystyle \sum_{n,m=0}^{\infty}\ket{n}\bra{n}\otimes\ket{m+n}\bra{m}}$\tabularnewline
\midrule
\textsc{quad} \eqref{eq:rotor_quad}~~~~~~~~ & Kerr interaction & ${\displaystyle e^{i\varphi\hat{n}(\hat{n}+1)/2}}$\tabularnewline
\midrule
\textsc{cphs} \eqref{eq:rotor_cphs} & Cross-Kerr interaction & ${\displaystyle e^{i\varphi\hat{n}\otimes\hat{n}}}$\tabularnewline
\bottomrule
\end{tabular}
\caption{Planar-rotor Clifford operations as realized by the ``number-phase rotor'' interpretation of the oscillator from Sec.~\ref{sec:preliminary}; each number-phase operator has the subscript \(\text{np}\) in the main text.
These are obtained by projecting planar-rotor operators into the subspace of non-negative momentum states and interpreting said states as oscillator Fock states \(|n\rangle\).
Parity flip operations are not present because they flip the sign of the momentum states.
Pauli-\(X\) rotations are mapped into powers of the non-unitary Pegg-Barnett phase operator \cite{susskind1964quantum,bergou1991operators,bartlett2002quantum}, which performs photon subtraction and injection.
The \textsc{cnot} gate is mapped to a non-unitary controlled photon injection operator. The Pauli and symplectic groups become semigroups with identity (i.e., monoids), but both non-unitary operators \(U\) remain valid quantum channels because \(U^\dagger U = \mathbb{I}\).
}
\label{table:fake_rotor_Clifford}
\end{table}

After applying the number-phase rotor mapping, we see that the rotor Clifford-group operations $\textsc{quad}_{\text{np},\varphi}$ and $\textsc{cphs}_{\text{np},\varphi}$ are unitary operators, but $\textsc{cnot}_{\text{np}}$ is not.
Indeed, we can show that
\begin{eqs}
    &\textsc{cnot}_{\text{np}}^\dagger \textsc{cnot}_{\text{np}}=\mathbb{I},\\
    &\textsc{cnot}_{\text{np}}\textsc{cnot}_{\text{np}}^\dagger=\sum_{n=0}^\infty \ket{n} \bra{n} \otimes \Pi_{\geq n}.
\end{eqs}
Nevertheless, $\textsc{cnot}_{\text{np}}(\cdot)\textsc{cnot}_{\text{np}}^\dagger$ remains a valid quantum channel that can be utilized for encoding.

Analogous to  Eqs.~\eqref{eq:rotor_cnot}, \eqref{eq:rotor_quad}, and \eqref{eq:rotor_cphs}, we can determine how the number-phase rotor Clifford gates transform the number-phase rotor quadratures:
\begin{eqs}\label{eq:number_phase_rotor_heisenberg}
\textsc{cnot}_{1\rightarrow2,\text{np}}(X(1)_{\text{np}}\otimes\mathbb{I})\textsc{cnot}_{1\rightarrow2,\text{np}}^{\dagger}&=X(1)_{\text{np}}\otimes X(1)_{\text{np}},\\\textsc{cnot}_{1\rightarrow2,\text{np}}(X(-1)_{\text{np}}\otimes\mathbb{I})\textsc{cnot}_{1\rightarrow2,\text{np}}^{\dagger}&=X(-1)_{\text{np}}\otimes X(-1)_{\text{np}},\\\textsc{cnot}_{1\rightarrow2,\text{np}}(\mathbb{I}\otimes Z(\phi)_{\text{np}})\textsc{cnot}_{1\rightarrow2,\text{np}}^{\dagger}&=Z(-\phi)_{\text{np}}\otimes Z(\phi)_{\text{np}},\\\textsc{quad}_{\text{np},\varphi}X(1)_{\text{np}}\textsc{quad}_{\text{np},\varphi}^{\dagger}&=Z(\varphi)_{\text{np}}X(1)_{\text{np}},\\\textsc{cphs}_{\text{np},\varphi}(X(1)_{\text{np}}\otimes\mathbb{I})\textsc{cphs}_{\text{np},\varphi}^{\dagger}&=X(1)_{\text{np}}\otimes Z(\varphi)_{\text{np}},\\\textsc{cphs}_{\text{np},\varphi}(\mathbb{I}\otimes X(1)_{\text{np}})\textsc{cphs}_{\text{np},\varphi}^{\dagger}&=Z(\varphi)_{\text{np}}\otimes X(1)_{\text{np}}.
\end{eqs}
This shows that, despite the presence of non-unitarity, number-phase Clifford semigroup operations can be used to perform conditional operations and extract syndrome information for homological number-phase codes. Notably, the number-phase Clifford semigroup transforms the number-phase Pauli semigroup in the same way as the relevant part of the rotor Clifford group transforms the rotor Pauli operators.
Hence, the symplectic representation we constructed in Sec.~\ref{sec:rotor_Clifford} is applicable to calculate the Clifford transformation of Pauli operators for the number-phase rotor.

\begin{example}\label{example:422_rotor_encoder}
We use the same current-mirror code as in Example \ref{example:422rotor}. The initial state is stabilized by the stabilizer semigroup
\begin{eqs}
    \mathcal{S}_{0}&=\langle X_{1}(1)_{\text{np}}^{\emph{\dagger}},X_{2}(2)_{\text{np}}^{\emph{\dagger}},X_{3}(1)_{\text{np}}^{\emph{\dagger}},Z_{4}(\phi)_{\text{np}}\,~\forall\phi\in\mathbb{T}\rangle\\&=\langle(-1,0,0,0|\boldsymbol{0})^{T},(0,-2,0,0|\boldsymbol{0})^{T},(0,0,-1,0|\boldsymbol{0})^{T},\\&\quad(\boldsymbol{0}|0,0,0,\phi)^{T}\,\,~\forall\phi\in\mathbb{T}\rangle\,,
\end{eqs}
in which we use $\boldsymbol{0}$ to denote $(0,0,0,0)$. The second rotor encodes logical qubit information via a single rotor code with torsion $\mathbb{Z}_2$. We can explicitly write down the initial state as
\begin{eqs}
    \ket{\psi_0}=\sum_{l_1, l_2, l_3\in \mathbb{N}} \ket{l_1}\otimes ( \alpha\ket{2l_2}+\beta \ket{2l_2+1})\otimes \ket{l_3} \otimes \ket{0}.
\end{eqs}
The logical Pauli operators for the initial state are $X=X_2(1)^\emph{\dagger}, Z=Z_2(\pi)$. They admit a vector representation as
\begin{eqs}
     X&=(0, -1, 0, 0 | \boldsymbol{0})^T,~~ Z= (\boldsymbol{0}| 0, \pi, 0, 0)^T.
\end{eqs}

The symplectic representation of the encoding circuit is
\begin{eqs}
    Q^+=\begin{pmatrix}
        A_{\emph{enc}}^+ & 0 \\
        0 & (A_{\emph{enc}}^{+T})^{-1}
    \end{pmatrix}, ~~ A_{\emph{enc}}^+=\begin{pmatrix}
        1 & 1& 0 & 0\\
        0 & 1 & 0 & 1\\
        0 & 0& 1 & 1\\
        0& 0& 0 & 1
    \end{pmatrix},
\end{eqs}
which corresponds to the encoding circuit $U_{\emph{enc}}^+=\emph{\textsc{cnot}}_{1\rightarrow 2,\text{np}} \emph{\textsc{cnot}}_{3\rightarrow 4,\text{np}} \emph{\textsc{cnot}}_{2 \rightarrow 4,\text{np}}$.
The original encoding circuit (before parity flips) of stabilizers shown in Eq.~\eqref{eq:unfliped_four_rotor} is $U_{\emph{enc}}=\emph{\textsc{cnot}}_{1\rightarrow 2}^\emph{\dagger} \emph{\textsc{cnot}}_{3\rightarrow 4}^\emph{\dagger} \emph{\textsc{cnot}}_{2\rightarrow 4}^\emph{\dagger} = \textsc{p}_2 \textsc{p}_3 U_{\emph{enc}}^+ \textsc{p}_2 \textsc{p}_3$.

The stabilizer semigroup and logical operators of those codewords can be calculated according to the method in Sec.~\ref{sec:rotor_Clifford}.
We then transform $\mathcal{S}_0, X, Z$ by multiplying them by $Q^+$, and we obtain
\begin{eqs}
    \mathcal{S}&=\{(-1,-1,0,0|\boldsymbol{0})^T, (0,-2,0,-2|\boldsymbol{0})^T,(0,0,-1,-1|\boldsymbol{0})^T,\\
    &\quad(\boldsymbol{0}| \phi,-\phi,-\phi,\phi)^T\},
    ~ \forall \phi \in \mathbb{T}\\
    \overline{X}&=(0, -1, 0, -1 | \boldsymbol{0})^T,~~ \overline{Z}= (\boldsymbol{0}| -\pi, \pi, 0, 0)^T,
\end{eqs}
which are the stabilizer semigroup and logical operators of the codewords.

After we apply $U_{\emph{enc}}^+$ to $\ket{\psi_0}$, we can see the codestate is written as
\begin{equation}
    \ket{\psi}=U_{\emph{enc}}^+ \ket{\psi_0}=\alpha \ket{\overline{0}}_{\text{np}} + \beta \ket{\overline{1}}_{\text{np}},
\end{equation}
where $\ket{\overline{0}}_{\text{np}}$, $\ket{\overline{1}}_{\text{np}}$ are codewords shown in Eq.~\eqref{eq:four_np}.
\end{example}

The main challenge of syndrome measurement of homological number-phase codes using the Clifford semigroup is measuring $X$-type stabilizers.
This is because the conditional photon subtraction, $\textsc{cnot}_{np}^\dagger$, is not a completely positive trace-preserving (CPTP) map.
However, we can construct the following CPTP map
\begin{eqs}
    \mathcal{D}: \rho &\rightarrow \textsc{cnot}_{\text{np}}^\dagger \rho \textsc{cnot}_{\text{np}}+  \mathcal{P}_C^\dagger \rho \mathcal{P}_C \\
    &= \mathcal{D}_1 (\rho)+ \mathcal{D}_2(\rho)~,
\end{eqs}
where $\mathcal{P}_C= \sum_{n=0}^\infty\ket{n} \bra{n} \otimes (\sum_{m=0}^{n-1} \ket{m} \bra{m}) $ is a projector into the subspace where the photon number on the second mode is smaller than the photon number on the first mode. This CPTP map can decomposed to two non-CPTP maps $\mathcal{D}_1$ and $\mathcal{D}_2$. The $\mathcal{D}_1$ is the map we desire to implement for $X$-type syndrome extraction while the $\mathcal{D}_2$ is undesired. Hence, the syndrome extraction process $\mathcal{D}$ is a probabilistic process. Nevertheless, the syndrome information can be obtained after post-selecting on $\mathcal{D}_1$.

\begin{table*}[t]
    \centering
    \begin{tabular}{llll}
    \toprule
         & planar rotor& embedded rotor& number-phase rotor \\
         \midrule
        physical space & rotors & oscillators& oscillators\\
        \midrule
        logical space & \(k\) rotors & \(k\) rotors& \(k\) oscillators\\
        \midrule
        state on \(n-k\) subsystems~~~~& rotor GKP code & oscillator GKP code & number-phase code\\
        \midrule
        encoding circuit & rotor Clifford & rotor Clifford & Clifford semigroup\\
        \midrule
        error & rotor Pauli \eqref{eq:rotor_pauli_operator} & oscillator displacements~~~ & photon loss and dephasing \eqref{eq:fake_rotor_quadrature}\\
        \bottomrule
    \end{tabular}
    \caption{Comparison between planar rotor GKP-stabilizer, embedded rotor GKP-stabilizer, and number-phase rotor GKP-stabilizer codes outlined in Sec.~\ref{sec:gkp_repetition_rotor}. }
    \label{tab:comparison_GKP_stabilizer}
\end{table*}

\section{GKP-stabilizer codes for rotors}\label{sec:gkp_repetition_rotor}

GKP-stabilizer codes \cite{noh2020encoding} have been proposed as a way to encode logical oscillators into physical oscillators and utilize oscillator GKP states to protect against displacement noise.
In this section, we discuss versions of GKP-stabilizer codes for all three rotors.

In this section, we consider rotor versions of GKP-stabilizer codes \cite{noh2020encoding}, which allow one to protect an infinite-dimensional logical space against displacement noise.
The interpretation of the harmonic oscillator as a number-phase rotor allows us to map such codes back into the oscillator, yielding codes protecting against bosonic dephasing errors.

The encoding of each version consists of placing logical information into \(k\) subsystems (either rotors or oscillators), initializing the remaining \(n-k\) subsystems in a particular resource state, and applying a Clifford circuit.
The noise that the code is suitable for depends on the resource state.
The ingredients for each code are summarized in Table~\ref{tab:comparison_GKP_stabilizer}.

\begin{enumerate}
    \item \textbf{Planar rotor GKP-stabilizer codes} We encode \(k\) logical rotors into \(n\) physical rotors, with \(n-k\) rotors initialized in rotor GKP states. This provides a way to protect logical rotors, which are compact and infinite-dimensional, from physical rotor Pauli $X$ and $Z$ errors.
    Although the homological rotor code \cite{vuillot2023homological} can also encode logical rotors into physical rotors, its $Z$-type stabilizers are continuous, as shown in Eq.~\eqref{eq:rotor_stabilizer_group}, while the rotor GKP-stabilizer code inherits the discrete stabilizer group of the rotor GKP states on the \(n-k\) rotors. \textcolor{black}{This construction can correct the rotor Pauli $X$ and $Z$ error acting on the $k$ logical rotors.}

    \item \textbf{Embedded rotor GKP-stabilizer codes}
    This code can be regarded as a GKP-stabilizer code whose logical space forms rotors instead of oscillators due to the extra embedded-rotor stabilizer constraint placed on the \(k\) logical modes.
    We first encode \(k\) logical rotors into \(k\) logical oscillator modes via the embedded rotor technique.
    Then, we encode the \(k\) logical oscillator modes into \(n\) multiple oscillator modes by initializing \(n-k\) modes in oscillator GKP states\footnote{They are concatenation between oscillator GKP codes and qudit GKP codes (see Sec.~\ref{sec:rotor_gkp}), here we still call them as oscillator GKP codes because of its comb structure.} and applying a Gaussian transformation. We can also regard this code as a concatenation between rotor-to-oscillator code and oscillator GKP-stabilizer code.
    As with the regular GKP-stabilizer code, this code can correct position and momentum displacements.

    \item \textbf{Number-phase rotor GKP-stabilizer codes} \textcolor{black}{Analogous to the planar rotor GKP-stabilizer codes,} we encode \(k\) logical oscillators into \(n\) physical oscillators \footnote{\textcolor{black}{Note that number-phase rotor is another description of the oscillator, hence this construction is for oscillator-to-oscillator codes that correct photon number changing and dephasing errors. }} by initializing \(n-k\) oscillators in number-phase codestates and applying a Clifford semigroup circuit \textcolor{black}{as encoder}. \textcolor{black}{Such construction can be regarded as the polar coordinate generalization of oscillator GKP-stabilizer codes which are formulated in the lattices of Cartesian coordinates (position and momentum). }
    One can also consider encoding the \(k\) rotors in a homological number-phase codeword via the prescription of Sec.~\ref{sec:homo_np_rotor_code}. 
    The number-phase rotor GKP-stabilizer codes can protect logical oscillator modes from quadrature noise of the number-phase rotor, which includes photon loss and dephasing.
\end{enumerate}

In the following of this section, we use the GKP-repetition code as an example to demonstrate all three constructions.

\begin{example}

We would like to encode a single logical rotor into two physical rotors while being able to detect $Z$-type rotor Pauli errors.
We place the logical information, denoted by the function \(\psi(k)\) for integer \(k\), into the first rotor, initialize the second rotor in a rotor GKP state, and apply a \textsc{cnot} gate.
This yields
\begin{eqs}
    \ket{\psi}_{\text{code}}&=\textsc{cnot}_{2\rightarrow1}\left(\sum_{k\in\mathbb{Z}}\psi(k)\ket{k}\right)\otimes\left(\sum_{\ell\in\mathbb{Z}}\ket{m\ell}\right)\\&=\sum_{k,\ell\in\mathbb{Z}}\psi(k)\ket{m\ell+k}\otimes\ket{m\ell}~,
\end{eqs}
where all constituent states are rotor momentum states. This code is stabilized by rotor Pauli strings $X(m) \otimes X(m)$ and $\mathbb{I} \otimes Z(\frac{2\pi}{m})$. \textcolor{black}{The logical Pauli operator of the encoded rotor becomes $\overline{X}(l)=X(l)\otimes \mathbb{I}$, $\forall l\in \mathbb{Z}$  and $\overline{Z}(\phi)= Z(\phi) \otimes Z(-\phi) $, $\forall \phi \in \mathbb{T}$.}

Using Eq.~\eqref{eq:rotor_cnot}, rotor Pauli errors propagate as
\begin{eqs}
\begin{split}
\textsc{cnot}_{1\rightarrow 2} (X(l) \otimes \mathbb{I})&=(X(l) \otimes X(l)) \textsc{cnot}_{1\rightarrow 2}, \\
\textsc{cnot}_{1\rightarrow 2} (\mathbb{I}\otimes Z(\phi))&= (Z(-\phi) \otimes Z(\phi)) \textsc{cnot}_{1\rightarrow 2},
\end{split}
\end{eqs}
enabling us to extract error syndromes as follows.

Suppose we have a codeword corrupted by a dephasing error $Z(\xi_1^Z)\otimes Z(\xi_2^Z)$, the noisy codeword is written as $Z(\xi_1^Z)\otimes Z(\xi_2^Z) \ket{\psi}_{\text{code}}$. Then we initialize an ancillary third mode in the state \begin{eqs}
    \ket{+}_{\text{U}(1)}\propto \sum_{n\in \mathbb{Z}} \ket{nm}=\sum_{n=0}^{m-1}\ket{\theta=\frac{2\pi n}{m}}~,
\end{eqs}
where \(\ket{\theta}\) is a rotor position state.
Finally, we apply $\textsc{cnot}_{3\rightarrow 1} \textsc{cnot}_{3\rightarrow 2}$.
This yields
\begin{widetext}
  \begin{subequations}\label{eq:gkp_repetition_X_syndrome}
\begin{align}
    &\!\!\!\!\!\!\!\!\!\!\!\!\!\!\!\!\!\!\!\!\!\textsc{cnot}_{3\rightarrow1}\textsc{cnot}_{3\rightarrow2}(Z(\xi_{1}^Z)\otimes Z(\xi_{2}^Z))\ket{\psi}_{\text{code}}\otimes\ket{+}_{\text{U}(1)}=\\&=(Z(\xi_{1}^Z)\otimes Z(\xi_{2}^Z)\otimes Z(-\xi_{1}^Z-\xi_{2}^Z))\textsc{cnot}_{3\rightarrow1}\textsc{cnot}_{3\rightarrow2}\ket{\psi}_{\text{code}}\otimes\ket{+}_{\text{U}(1)}\\&=(Z(\xi_{1}^Z)\otimes Z(\xi_{2}^Z)\otimes Z(-\xi_{1}^Z-\xi_{2}^Z))\sum_{k,n,f\in\mathbb{Z}}\psi(k)\ket{m(n+f)+k}\otimes\ket{m(n+f)}\otimes\ket{mf}\\&=(Z(\xi_{1}^Z)\otimes Z(\xi_{2}^Z)\otimes Z(-\xi_{1}^Z-\xi_{2}^Z))\ket{\psi}_{\text{code}}\otimes\ket{+}_{\text{U}(1)}\\&=(Z(\xi_{1}^Z)\otimes Z(\xi_{2}^Z)\otimes Z(-\xi_{1}^Z-\xi_{2}^Z))\ket{\psi}_{\text{code}}\otimes\sum_{f=0}^{m-1}\ket{\theta=\frac{2\pi f}{m}}\\&=(Z(\xi_{1}^Z)\otimes Z(\xi_{2}^Z)\otimes\mathbb{I})\ket{\psi}_{\text{code}}\otimes\sum_{f=0}^{m-1}\ket{\theta=\frac{2\pi f}{m}+\xi_{1}^Z+\xi_{2}^Z}.
    \end{align}
  \end{subequations}
\end{widetext}
Then, by performing a phase-basis projective measurement on the third ancillary mode, we extract the syndrome $(\xi_1^Z+\xi_2^Z) \Mod \frac{2\pi}{m}$. \textcolor{black}{The correctable $Z$-error would be in the interval $(\xi_1^Z+\xi_2^Z) \in [-\frac{\pi}{m},\frac{\pi}{m})$ where $(\xi_1^Z+\xi_2^Z) \Mod \frac{2\pi}{m}=\xi_1^Z+\xi_2^Z$. Then we apply the decoding unitary $\textsc{cnot}_{2\rightarrow1} ^\dagger$ to the noisy codeword and obtain 
\begin{eqs}
    (Z(\xi_1^Z+\xi_2^Z) \otimes \mathbb{I}) \Big( \sum_{k \in \mathbb{Z}} \psi(k) \ket{k}\Big) \otimes \Big( \sum_{\mathcal{l} \in \mathbb{Z}} \ket{ml}\Big).
\end{eqs}
Then we perform error correction by applying $Z(-\frac{\xi_1^Z+\xi_2^Z}{2})$ such that the logical $Z$ variance is $\text{Var}(\frac{\xi_1^Z+\xi_2^Z}{2})$. In the general case, where the $X$ and $Z$ errors appear simultaneously such as $X(\xi_1^X) Z(\xi_1^Z)\otimes X(\xi_2^X) Z(\xi_2^Z)$, aside from the above analysis, we can use another ancilla to measure the $X$ error acting on the second rotor by measuring stabilizer $\mathbb{I}\otimes Z(\frac{2\pi}{m})$.
The error correction procedure of this code for the general case is to prepare two ancilla rotors in $\ket{+}_{U(1)}$ (one for the $Z$-error and another for $X$-error). For the $X$-error, we can extract the syndrome $\xi_2^X \Mod m$ by applying $\textsc{cnot}_{2\rightarrow 4}$ and measure the angular momentum on the fourth rotor.
The correctable error $X$-error would be in the interval $\xi_2^X \in [-\frac{m}{2}, \frac{m}{2})$. Sharing the spirit of oscillator GKP-stabilizer codes \cite{noh2020encoding}, if we assume the $X$ and $Z$-types errors are identical and independent (i.i.d) random variables and their variances are much smaller than $\frac{\pi}{m}$ and $m$ respectively, then such construction can reduce the variance of $Z$-type error acting on the logical rotor to $\frac{1}{2}$ while it does not amplify the $X$-type error. }  

\

We can repeat the projection procedure we discussed in Sec.~\ref{sec:homo_np_rotor_code} to obtain number-phase rotor resource states for this version of GKP-stabilizer codes.
For this number-phase version, all ingredients are replaced by their number-phase rotor counterparts.
The stabilizer becomes $X(m)^\emph{\dagger}_{\text{np}} \otimes X(m)^\emph{\dagger}_{\text{np}}$ and $\mathbb{I} \otimes Z(\frac{2\pi}{m})$.
The $Z$-type errors correspond to the dephasing channels in bosonic systems, and $X$-type errors (momentum kicks) correspond to the photon loss channel in bosonic systems.

\end{example}

\section{Discussion and Outlook}

In this work, we quantify computational primitives --- states and operations --- of the planar or \(\text{U}(1)\) rotor.
We investigate rotor Clifford operations in Sec.~\ref{sec:rotor_Clifford}. Drawing an analogy to oscillator Gaussian states, we study the rotor nullifier states (analogues of position and momentum eigenstates), coherent states, Josephson junction Hamiltonians, as well as their transformations under the rotor Clifford group in Sec.~\ref{sec:rotor_Gaussian_state}.

Due to the structure of the rotor Clifford group, rotor nullifier states and coherent states are not related via rotor Clifford transformations. These problems call for a suitable definition of the Gaussian and non-Gaussian states for rotors, as well as a quantification of non-Gaussianity, or "magic" \cite{bravyi2005universal, genoni2007measure, howard2017application,chabaud2020stellar,liu2022many}, in rotor states. It is worthwhile to continue this direction since studies of magic and non-Gaussianity of quantum states on discrete and continuous-variable quantum systems yield various results on realizing fault-tolerant quantum computing \cite{kenfack2004negativity,garcia2021bloch,mensen2021phase,hahn2022quantify,baragiola2019all,yamasaki2022cost,liu2022many}, estimation of quantum information resources \cite{veitch2012negative,veitch2014resource,chitambar2019quantum, howard2017application}, and characterizing exotic quantum phases of matter \cite{ellison2021symmetry}.

In Sec.~\ref{sec:preliminary}, we discuss how to faithfully realize a planar rotor, which we call the embedded rotor, inside a harmonic oscillator.
This provides a way to simulate quantum systems described by $\text{U}(1)$ degrees of freedom with harmonic oscillators with modular variables \cite{aharonov1969modular}.
This may offer hardware-efficient quantum simulation schemes of certain $\text{U}(1)$ lattice models, and our newly quantified primitives lend themselves to many-body variational methods.

In Sec.~\ref{sec:preliminary}, we utilize a non-faithful embedding of a rotor inside an oscillator --- the number-phase rotor --- that treats the Fock space of oscillator as the non-negative angular momentum subspace of the rotor.
This enables us to construct oscillator codes against photon number changing and dephasing errors by adapting homological rotor codes in Sec.~\ref{sec:homo_np_rotor_code}.
This yields a multi-mode generalization of number-phase codes \cite{grimsmo2021quantum} which we call homological number-phase codes.
The performance of homological number-phase codes and the number-phase uncertainty relation of these codes are left for future studies. Moreover, the number-phase rotor provides another approach to realize rotor algebras in harmonic oscillator systems.
Its use in quantum simulation of many-body physics is also an interesting question.

In Sec.~\ref{sec:homo_rotor_revisit}, we explain the method of calculating the logical space of homological rotor codes via Smith normal form. Homological rotor codes are classified by their torsion parts, and the torsion structure is invariant under CSS rotor Clifford transformations.

We find a relation between the single-rotor code with torsion and the oscillator GKP qudit in the embedded rotor formalism. Also, we show that rotor GKP codes can be treated as a concatenation between homological rotor codes and modular-qudit GKP codes.

Having quantified some of the basic computational primitives for the planar rotor, we leave open the question of the Clifford hierarchy \cite{gottesman1999quantum,zeng2008semi,cui2017diagonal}
We believe that, once this hierarchy is determined for oscillator systems, projecting each member of the hierarchy into the oscillator's embedded rotor should be useful in backing out the corresponding operators for the planar rotor.

It would be theoretically important as well as practically useful to develop analogous symplectic primitives for more general quantum systems, such as molecules, rigid bodies and other group-valued qudits \eczoo{group_gkp}.
They may open the possibilities of utilizing other physical platforms, as well as providing better candidates for the study of holographic gravity. This would be an interesting direction to pursue in the future.

\begin{acknowledgments}
We thank Christophe Vuillot and Barbara M.\@ Terhal for helpful discussions during the 2023 Quantum Information Processing conference.
Y.X.\@ thanks Yu-An Chen, Ryohei Kobayashi, Zi-Wen Liu, Isaac Kim, and Hassan Shapourian for useful discussions.
Y.W.\@ thanks Sirui Shuai for discussions on semi-direct product group structures.

Y.X.\@ and Y.W.\@ thank the organizers and participants of the YIPQS long-term workshop YITP-T-23-01
"Quantum Information, Quantum Matter and Quantum Gravity (2023)" held at Yukawa Institute for Theoretical Physics in Kyoto University, where part of this work was completed.
Y.X.\@ was partially supported by ARO Grant No. W911NF-15-1-0397,
National Science Foundation QLCI Grant No. OMA2120757, AFOSR-MURI Grant No. FA9550-19-1-0399,
and Department of Energy QSA program.
Y.W.\@ is supported by Shuimu Tsinghua Scholar Program of Tsinghua University.

Contributions to this work by NIST, an agency of the US government, are not subject to US copyright.
Any mention of commercial products does not indicate endorsement by NIST.
Y.X.\@ thanks Yujie Zhang and Shan-Ming Ruan for their hospitality in Tokyo and Kyoto.
Y.X.\@ thanks Yilun Li, Honggeng Zhang, Xuan Su, Qiongsi Yan, and Bo He for their mental support.
V.V.A.\@ thanks Olga Albert and Ryhor Kandratsenia for providing daycare support throughout this work.
\end{acknowledgments}

\appendix

\section{Wigner Function of rotor GKP codes}
\label{app:coherent_Wigner}
In the oscillator cases, GKP states are non-Gaussian states that have Wigner negativity and infinite Stellar rank \cite{walschaers2021non,mensen2021phase,garcia2021bloch,hahn2022quantify,chabaud2020stellar,chabaud2022holomorphic}. The studies of the phase space structure and non-Gaussianity of bosonic modes yield various discoveries of classical simulability and quantum magic of bosonic systems \cite{baragiola2019all,chabaud2020stellar, garcia2020efficient,walschaers2020practical,chabaud2021continuous,bourassa2021fast,chabaud2022holomorphic,hahn2022quantify,calcluth2022efficient,chabaud2023resources}.

In this section, we calculate the Wigner function of rotor GKP code as an example, and show its Wigner function has a similar form as the Wigner function of oscillator GKP states, and indeed, the rotor GKP states have Wigner negativity and can be written as a sum of Kronecker and Dirac delta functions. The presence of Kronecker delta functions is because of the discrete-variable nature of angular momentum, while, in contrast, both position and momentum are continuous variables in harmonic oscillators.

The studies of the Wigner function of the planar rotor are initialized in the context of orbital angular momentum states of light~\cite{rigas2008full,rigas2010non,rigas2011orbital,kastrup2016wigner,kastrup2017wigner_operator,kastrup2017wigner_qi}, which is also described by planar rotor. The modular Wigner function is also studied in the context of oscillator GKP codes, considering the modular nature of position and momentum variables \cite{fabre2020wigner}.

The definition of Wigner function of planar rotor is written as
\begin{eqs}
    W_\rho(l,\phi)=\frac{1}{2\pi} \int_{-\pi}^{\pi} \bra{\phi-\frac{\phi'}{2}} \rho \ket{\phi+\frac{\phi'}{2}} e^{i\phi'l}d\phi',
\end{eqs}
where the Wigner function has two canonical variable: phase $\phi$ and angular momentum $l$. This function is defined on a infinite cylinder $\mathbb{T}\times \mathbb{Z}$ as shown in Fig.~\ref{fig:phase_space}.

Here we consider $\rho=\ket{0}_L \bra{0}_L$ and calculate its Wigner function
\begin{eqs}\label{eq:rotor_wigner}
    W_\rho(l,\phi)=&\frac{1}{2\pi} \int_{-\pi}^{\pi} \sum_{m,m'=0}^{N-1} \langle \phi-\frac{\phi'}{2}|\frac{2\pi m}{N}\rangle \langle \frac{2\pi m'}{N}| \phi+\frac{\phi'}{2}\rangle e^{i\phi'l}d\phi'\\
    =&\frac{1}{2\pi} \int_{-\pi}^{\pi} \sum_{m,m'=0}^{N-1} \delta_{2\pi} (\phi-\frac{\phi'}{2}-\frac{2\pi m}{N}) \\
    &\delta_{2\pi} (\phi+\frac{\phi'}{2}-\frac{2\pi m}{N}) e^{i\phi' l}d\phi'.
\end{eqs}
Here the $\delta_{2\pi}(x)$ represents periodic delta function, which is
\begin{eqnarray}
		\delta_{2\pi}(x) = \left\{
		\begin{aligned}
			&\delta(0) \quad &\text{if}\quad  x\Mod 2\pi=0\\
			&0 &\quad \text{otherwise}.
		\end{aligned}
		\right.
	\end{eqnarray}

Hence, we can rewrite Eq.~\eqref{eq:rotor_wigner} as
\begin{eqs}
    &W_\rho(l,\phi)\\
    =&\frac{1}{2\pi} \int_{-\pi}^{\pi} \sum_{m,m'\in \mathbb{Z}} \delta (\phi-\frac{\phi'}{2}-\frac{2\pi m}{N}) \delta (\phi+\frac{\phi'}{2}-\frac{2\pi m'}{N}) e^{i\phi' l}d\phi'\\
    \propto & \frac{1}{2\pi} \sum_{m,m'\in \mathbb{Z}} \delta(2\phi -\frac{2\pi}{N}(m+m')) e^{il(-2\phi+\frac{4\pi m'}{N})}\\
    = & \frac{1}{ 2\pi} \sum_{c,d\in \mathbb{Z}} \delta(\phi-\frac{\pi c}{N}) (-1)^{ cd} \delta_{l,Nd/2}.
\end{eqs}

Similarly, the Wigner function of $\ket{1} \bra{1}$ is
\begin{eqs}
    W_{\ket{1}\bra{1}} (l,\phi) \propto \frac{1}{ 2\pi} \sum_{c,d\in \mathbb{Z}} \delta(\phi-\frac{\pi (c+1)}{N}) (-1)^{ cd} \delta_{l,Nd/2}.
\end{eqs}
The Wigner function of rotor GKP states shows strong negativity relative to its oscillator counterparts. The difference between the Wigner functions of oscillator and rotor GKP states is that the former is a sum of products of Dirac delta functions in both position and momentum, while the latter is a sum of products of Dirac delta functions in phase and Kronecker deltas in angular momentum.

Another interesting phenomenon is that, in the rotor case, the angular momentum eigenstates $\ket{l}, l\in \mathbb{Z}$ are the only normalizable states with non-negative Wigner function~\cite{rigas2010non}. This statement matches our understanding of oscillator systems with discrete translational symmetry in position direction $S_q=e^{i2\sqrt{\pi}\hat{p}}$. The oscillator system with spatial periodicity will have quantized angular momentum defined on $\mathbb{Z}$, and it can be regarded as a rotor (mathematically). In this picture, the angular momentum eigenstates of the rotor correspond to the momentum eigenstates of a periodic oscillator, which are Gaussian states with non-negative Wigner functions.

\section{Quantum Error-Correction Condition for Normalized Rotor GKP codes}\label{app:ec_condition_rotor_GKP}

For the ideal oscillator GKP states, the codewords are equal-weight superpositions of infinite numbers of Delta functions which are unnormalizable and require unbounded energy to prepare.
In practice, we typically impose various regularizers to impose the normalization and finite-energy conditions, and the Gaussian regularizer can be implemented experimentally in trapped-ion, superconducting-circuit, and optical platforms \cite{fluhmann2019encoding,royer2020stabilization,de2022error, sivak2022model,sivak2023real,konno2023propagating}.
In this section, we study the quantum error correction conditions \cite{knill1997theory,bennett1996mixed} for normalized rotor GKP states that are regularized by the Gaussian regularizer $E_\Delta (\hat{l})$.

The unnormalizable codeword of rotor GKP code is
\begin{eqs}
    \ket{0}_L=\sum_{k\in \mathbb{Z}} \ket{l=kN}, \quad \ket{1}_L=\sum_{k\in \mathbb{Z}} (-1)^k \ket{l=kN}.
\end{eqs}
To normalize them, we impose a Gaussian envelope $e^{-\Delta \hat{l}^2}$ such that
\begin{eqs}
    &\ket{0}_\Delta= \sum_{k\in \mathbb{Z}} e^{-\Delta (kN)^2} \ket{l=kN}, \\
    &\ket{1}_\Delta=\sum_{k \in \mathbb{Z}} e^{-\Delta (kN)^2} (-1)^k \ket{l=kN}.
\end{eqs}

The error operator is written as $E_m(\theta)=Z(\theta) X(m)$. Hence, we can calculate the quantum error-correction condition for the normalized rotor GKP states
\begin{eqs}\label{eq:QECC_normalized_rotor_GKP}
    \bra{\phi_i} E_{m'}(\theta')^\dagger E_m (\theta) \ket{\phi_j}= e^{i (\theta-\theta')m}\bra{\phi_i}X(m-m') Z(\theta-\theta') \ket{\phi_j}.
\end{eqs}
\begin{widetext}
We have
\begin{eqs}
    X(m-m')Z(\theta-\theta') \ket{0}_\Delta &= \sum_{k\in \mathbb{Z}} e^{-\Delta (kN)^2} e^{i(\theta-\theta')kN} \ket{l=kN+m-m'},\\
    X(m-m')Z(\theta-\theta') \ket{1}_\Delta &= \sum_{k\in \mathbb{Z}} (-1)^k e^{-\Delta (kN)^2} e^{i(\theta-\theta')kN} \ket{l=kN+m-m'}.
 \end{eqs}
 Then we calculate Eq.~\eqref{eq:QECC_normalized_rotor_GKP} utilizing Jacobi theta functions
 \begin{eqs}
     \text{(a)} \quad & e^{i (\theta-\theta')m}\bra{0}_\Delta X(m-m') Z(\theta-\theta') \ket{0}_\Delta= e^{i(\theta-\theta')m}\sum_{k,k'\in \mathbb{Z}} \delta_{k'N, kN+m-m'} e^{-\Delta N^2 (k^2+ k'^2)} e^{i(\theta-\theta')kN}\\
     =& e^{i(\theta-\theta')m} \sum_{k\in \mathbb{Z}} e^{-\Delta N^2 (k^2 + (k+\frac{m-m'}{N})^2)} e^{i (\theta-\theta')kN} \delta_{m-m' ~\text{mod}~ N,0}\\
     =& e^{i(\theta-\theta')\frac{m+m'}{2}} e^{-\Delta \frac{(m-m')^2}{2}} \sum_{k\in \mathbb{Z}} e^{-2\Delta N^2 (k+\frac{m'-m}{2N})^2} e^{i(\theta-\theta')N (k+\frac{m'-m}{2N})} \delta_{m-m' ~\text{mod}~ N,0}\\
     =& \begin{cases}  e^{i(\theta-\theta')\frac{m+m'}{2}} e^{-\Delta \frac{(m-m')^2}{2}} \delta_{m-m' ~\text{mod}~ N,0} \vartheta_2 (z=\frac{(\theta-\theta')N}{2},q= e^{-2 \Delta N^2}) \quad \text{if}~~ \frac{m'-m}{N} ~~\text{is odd},  \\
     e^{i(\theta-\theta')\frac{m+m'}{2}} e^{-\Delta \frac{(m-m')^2}{2}} \delta_{m-m' ~\text{mod}~ N,0} \vartheta_3 (z=\frac{(\theta-\theta')N}{2},q= e^{-2 \Delta N^2}) \quad \text{if}~~ \frac{m'-m}{N} ~~\text{is even}.
     \end{cases}\\
     \text{(b)} \quad & e^{i (\theta-\theta')m}\bra{1}_\Delta X(m-m') Z(\theta-\theta') \ket{1}_\Delta \\
     =&\begin{cases} -e^{i(\theta-\theta')\frac{m+m'}{2}} e^{-\Delta \frac{(m-m')^2}{2}} \delta_{m-m' ~\text{mod}~ N,0} \vartheta_2 (z=\frac{(\theta-\theta')N}{2},q= e^{-2 \Delta N^2}) \quad \text{if}~~ \frac{m'-m}{N} ~~\text{is odd},\\
     e^{i(\theta-\theta')\frac{m+m'}{2}} e^{-\Delta \frac{(m-m')^2}{2}} \delta_{m-m' ~\text{mod}~ N,0} \vartheta_3 (z=\frac{(\theta-\theta')N}{2},q= e^{-2 \Delta N^2}) \quad \text{if}~~ \frac{m'-m}{N} ~~\text{is even}
     \end{cases}\\
     \text{(c)} \quad & e^{i (\theta-\theta')m}\bra{1}_\Delta X(m-m') Z(\theta-\theta') \ket{0}_\Delta \\
     = & \begin{cases} -e^{i(\theta-\theta')\frac{m+m'}{2}} e^{-\Delta \frac{(m-m')^2}{2}} \delta_{m-m' ~\text{mod}~ N,0} \vartheta_1 (z=\frac{(\theta-\theta')N}{2},q= e^{-2 \Delta N^2}) \quad \text{if}~~ \frac{m'-m}{N} ~~\text{is odd},\\
     e^{i(\theta-\theta')\frac{m+m'}{2}} e^{-\Delta \frac{(m-m')^2}{2}} \delta_{m-m' ~\text{mod}~ N,0} \vartheta_4 (z=\frac{(\theta-\theta')N}{2},q= e^{-2 \Delta N^2}) \quad \text{if}~~ \frac{m'-m}{N} ~~\text{is even}
     \end{cases}~.
 \end{eqs}
 \end{widetext}

 \section{Discussions on the no-go theorem for oscillator}\label{app:nogo_rotor}

In Ref.~\cite{vuillot2019toric}, authors proved a no-go theorem for Gaussian-stabilizer codes. The statement is: for mode-to-mode codes, if the encoding, error correction, and decoding all consist of only Gaussian operations, then these codes cannot correct Gaussian quadrature displacement errors.
Suppose the logical quadrature errors follow a Gaussian distribution $\mathcal{N}(0,\sigma^2_{q/p})$, then $\sigma_{q_L}^2 \sigma_{p_L}^2=\sigma_q^2 \sigma_p^2$ after encoding/decoding. This no-go theorem indicates that Gaussian stabilizer codes can only rotate/squeeze Gaussian errors, but will never reduce variance on both quadratures.

In this section, we will briefly review the derivation of Gaussian no-go theorem and its limitation, then we will comment its relevance to homological rotor codes.

We first state the conditions for the no-go theorem to be true:
\begin{itemize}
    \item Encoding unitary is a Gaussian operation (symplectic transformation), and ancilla states are all initialized in infinitely squeezed states (Gaussian states).

    \item The error correction is adding linear combinations of nullifiers onto logical quadratures. For example, the maximum-likelihood error correction is adding $-CG^T (GG^T)^{-1} G$ onto the logical quadrature $C$.

    \item Quadratures are defined on $\mathbb{R}$.
\end{itemize}

The derivation utilizes the linearity and orthogonality of symplectic vectors. Although the analog rotor codes also share a symplectic structure, their phase quadrature is a modular quadrature which doesn't have linearity. The lack of linearity in rotor systems provides an obstruction to generalizing the Gaussian no-go theorem for oscillators to rotor systems.

For a $[[n,k,d]]$ Gaussian stabilizer codes, the Gaussian unitary encoders $U_{\text{enc}}$ are symplectic transformations
\begin{eqs}
    U_{\text{enc}} \Vec{r} U_{\text{enc}}^\dagger=U_{\text{enc}} (\hat{q}_1,...,\hat{q}_n,\hat{p}_1,...,\hat{p}_n)^T U_{\text{enc}}^\dagger = A \Vec{r}.
\end{eqs}
We can decompose the symplectic matrix $A$ as \cite{vuillot2019toric}
\begin{eqs}
    A=\begin{pmatrix}
        Q\\
        G\\
        P\\
        D
    \end{pmatrix}.
\end{eqs}

The syndrome $\Vec{z}$ is given by following equation
\begin{eqs} \label{eq:nullifier_syndrome}
    \Vec{z}=G\Vec{\xi},
\end{eqs}
where $\Vec{\xi}$ is a $2n$-dimensional noise vector. The error-corrected logical quadrature can be written as
\begin{eqs} \label{eq:nullifier_QEC}
    C'=C-CG^T (GG^T)^{-1} G=C+\Lambda G.
\end{eqs}
And the covariance matrix of error-corrected logical quadratures can be diagonalized
\begin{eqs}
   K(C'C'^T)K^{-1}=\begin{pmatrix}
        \text{diag}(\sigma_{q,j}^2) & 0\\
       0 & \text{diag}(\sigma_{p,j}^2).
   \end{pmatrix}
\end{eqs}

This no-go theorem is applicable once Eqs.~\eqref{eq:nullifier_syndrome} and \eqref{eq:nullifier_QEC} are linear. However, in rotor case, Eq.~\eqref{eq:nullifier_syndrome} is no longer linear, the rotor syndrome has modular structure,
\begin{eqs}
    \Vec{z}_{\text{rotor}}= R_{2\pi } (G \Vec{\xi}),
\end{eqs}
where $R_{2\pi}$ is a rounding function that rounds the input to the nearest mulitplicity of $2\pi$. The modular structure is non-linear, hence, the no-go theorem will not hold true in rotor systems. However, if we drop the modularity by assuming the variance of syndrome is much smaller than $2\pi$, the rotors will be reduced to regular oscillators where the Gaussian no-go theorems holds true.

\bibliographystyle{ieeetr}
\bibliography{biblo.bib}

\end{document}